\documentclass[11pt,a4paper]{article}

\usepackage[margin=1in]{geometry}
\usepackage{amsmath,amssymb,amsthm}
\usepackage{thmtools, thm-restate}
\usepackage{enumitem}
\usepackage{comment}
\usepackage{graphicx}
\usepackage{xspace}
\usepackage{xcolor}
\usepackage{wrapfig}
\usepackage{makecell}
\usepackage{multirow}
\usepackage{dsfont}
\usepackage{ccfonts,eulervm} \usepackage[T1]{fontenc}\usepackage{hyperref}
\hypersetup{colorlinks=true,linkcolor=black,citecolor=[rgb]{0.5,0,0}}

\usepackage{setspace}
\setstretch{1.03}
\setlength{\parindent}{2em}
\setlength{\parskip}{0.25em}

\newcommand{\mS}{\mathcal{S}}
\newcommand{\mT}{\mathcal{T}}
\newcommand{\Q}{\mathcal{Q}}
\newcommand{\R}{\mathcal{R}}

\newcommand{\mP}{\mathcal{P}}
\newcommand{\B}{\mathcal{B}}
\renewcommand{\mP}{\mathcal{P}}
\newcommand{\C}{\mathcal{C}}
\newcommand{\I}{\mathcal{I}}
\newcommand{\D}{\mathcal{D}}

\newcommand{\sq}{S_{\query}}

\newtheorem{theorem}{Theorem}
\newtheorem{definition}{Definition}

\newtheorem{lemma}{Lemma}

\newcommand{\LIS}{\mathsf{LIS}\xspace}

\newcommand{\LISW}{\LIS^{\mathcal{W}}}
\newcommand{\RLIS}{\mathit{Range\mbox{-}LIS}\xspace}
\newcommand{\ORLIS}{\mathit{1D\mbox{-}Range\mbox{-}LIS}\xspace}
\newcommand{\TRLIS}{\mathit{2D\mbox{-}Range\mbox{-}LIS}\xspace}
\newcommand{\CORLIS}{\mathit{Colored\mbox{-}1D\mbox{-}Range\mbox{-}LIS}\xspace}
\newcommand{\CRLIS}{\mathit{Colored\mbox{-}2D\mbox{-}Range\mbox{-}LIS}\xspace}
\newcommand{\WRLIS}{\mathit{Weighted\mbox{-}1D\mbox{-}Range\mbox{-}LIS}\xspace}
\newcommand{\query}{q}
\renewcommand{\tilde}{\widetilde}
\newcommand{\TO}{\widetilde{O}}
\newcommand{\LP}{\mP_{\leq}}
\newcommand{\RP}{\mP_{>}}
\newcommand{\CBMMH}{$\mathsf{CBMMH}$\xspace}
\newcommand{\COVH}{$\mathsf{COVH}$\xspace}

\begin{document}
	\title{Range Longest Increasing Subsequence and its Relatives}


	\author{Karthik C.\ S.\thanks{This work was supported by the National Science Foundation under Grant CCF-2313372 and by the Simons Foundation, Grant Number 825876, Awardee Thu D. Nguyen.}\\  Rutgers University\\ \texttt{karthik.cs@rutgers.edu} \and Saladi Rahul\thanks{This work supported in part by the Walmart Center for Tech Excellence at IISc (CSR Grant WMGT-23-0001).}\\   Indian Institute of Science\\ \texttt{saladi@iisc.ac.in}}

	\date{}
	
	\maketitle
	\thispagestyle{empty}

	\begin{abstract}
		
		Longest increasing subsequence ($\LIS$) is a classical textbook
		problem which is still actively 
		studied in various computational models. In this work, 
		we present a few results for the 
		range longest increasing subsequence 
		problem ($\RLIS$) and its variants. The input to $\RLIS$ is a sequence 
		$\mS$ of $n$ 
		real numbers and  a collection
		$\Q$ of $m$ query ranges and for each query in $\Q$, 
		the goal is to {\em report} the $\LIS$ of the sequence $ \mS $ restricted to that query. Our two main results are for the following generalizations of the $\RLIS$ problem:
		
		\begin{description}
			\item[2D Range Queries:] In this variant of the $\RLIS$ problem, each query is  a pair of ranges, one of indices and the other of values, and we provide a randomized algorithm 
			with running time\footnote{The notation
				$\TO$ hides polylogarithmic factors in $n$ and $m$.}
			$\TO(mn^{1/2}+ n^{3/2})+O(k)$, where $k$ is the 
			cumulative length of the $m$ output subsequences.
			This improves on the elementary 
			$\TO(mn)$ runtime algorithm when $m=\Omega(\sqrt{n})$. 
			Previously, the only known result breaking the quadratic barrier was of Tiskin~[SODA'10] which could only handle 1D range queries (i.e., each query was a range of indices) and also just outputted 
			the {\em length} of the $\LIS$ (instead of reporting the subsequence achieving that length). Subsequent to our paper, Gawrychowski, Gorbachev, and Kociumaka in a preprint have extended Tiskin's approach to handle reporting 1D range queries in $O(n(\log n)^3+m+k)$ time.
			
			\item[Colored Sequences:] In this variant of the $\RLIS$ problem, each element in $\mS$ is colored  and for each query in $\Q$, the goal is to report a monochromatic $\LIS$  contained in the sequence $ \mS $ restricted to that query. For 2D queries, we provide a randomized algorithm for this colored version
			with running time
			$\TO(mn^{2/3}+ n^{5/3})+O(k)$. Moreover, for 1D queries, we provide an improved algorithm with running time
			$\TO(mn^{1/2}+ n^{3/2})+O(k)$.
			Thus, we again improve on the elementary  $\TO(mn)$ runtime algorithm. Additionally, we prove that assuming the well-known Combinatorial Boolean Matrix Multiplication Hypothesis, that the runtime for  1D queries is essentially tight for combinatorial algorithms. 
		\end{description}

		Our algorithms combine several tools
		such as dynamic programming (to precompute
		increasing subsequences with some desirable properties),
		geometric data structures (to efficiently compute the
		dynamic programming entries), random sampling 
		(to capture elements which are part of the $\LIS$), classification 
		of query ranges into large $\LIS$ and small $\LIS$,
		and classification of colors into light and heavy.
		We believe that our techniques 
		will be of
		interest to tackle other variants of $\LIS$ problem 
		and other range-searching problems.

	\end{abstract}
	
	\newpage
	\pagenumbering{arabic}

	\section{Introduction}

	In the {\em longest increasing subsequence} ($\LIS$) problem, the 
	input is a sequence of $n$ real numbers $\mS=(a_1,a_2,\ldots,a_n)$
	and the goal is to {\em report} indices 
	$1\le i_1 < i_2 < \cdots < i_{t}\le n$, for the largest possible value of $t$, such that 
	$a_{i_1} < a_{i_2} < \cdots < a_{i_t}$. The {\em length} of the 
	$\LIS$ is then defined to be $t$.
	For simplicity of discussion, we will assume that all the
	real numbers in the sequence are distinct. A standard 
	dynamic programming algorithm reports the $\LIS$ in $O(n^2)$ time, 
	which can be improved to $O(n\log n)$ by performing 
	patience sort \cite{mallows1962patience,mallows1963patience, a03,fredman1975computing} or by suitably augmenting a binary search tree  
	which computes each entry in the dynamic programming table 
	in $O(\log n)$ time.

	In many applications, such as time-series data analysis, 
	the user might not 
	be interested in the $\LIS$ of the entire sequence.
	Instead they would like to focus their 
	attention only on a ``range'' of indices in the sequence. Here are a few examples to illustrate this point. 
	\begin{itemize}
		\item Consider a stock $X$ in the trading market. 
		Let the sequence represent the  daily 
		price of stock $X$
		from 1950 till 2022. Instead of querying 
		for the $\LIS$ of the entire sequence, the 
		user might gain more insights \cite{jin2007trend,li2017longest} about the 
		trends of the stock by querying
		for the $\LIS$ in a range of dates 
		(or time-windows) such as 
		Jan 2020 till March 2020 or Feb 2018 till Dec 2018. 
		
		\item In modern times, social media platforms 
		are the major source for generating 
		enormous content on the Internet on a 
		minute-to-minute basis. Consider 
		a time-series 
		data with statistics 
		at a fine-grained level (of 
		each minute of the day) about the number 
		of Google searches, number of Tweets posted, 
		number of Amazon purchases, or the number 
		of Instagram posts. Monitoring the 
		$\LIS$ under time-window constraints on such 
		datasets can lead to more insights 
		in understanding the social media usage trends
		and can also help
		in detecting abnormalities.
		
		\item Popular genome sequencing  algorithms identify high scoring segment pairs (HSPs) between a query transcript sequence 
		and a long reference genomic sequence in order to obtain a global
		alignment, and this amounts to computing the $\LIS$ in the reference genome~\cite{altschul1990basic,zhang2003alignment,albert2004longest}.  Typically many queries are made (corresponding to various transcripts/proteins) w.r.t.\ the same reference genomic sequence, and this can be modeled as range queries to compute the $\LIS$. 
	\end{itemize}
	
	Recently, in the theoretical 
	computer science community, there has been a lot of interest in the dynamic $\LIS$ 
	problem~\cite{mitzenmacher2020dynamic, ks21, gj21}, where  
	the goal is to maintain exact or 
	approximate $\LIS$ under insertion and deletion
	of elements. Interestingly, an important subroutine
	which shows up is the 
	computation of 
	$\LIS$ of a given range of elements in 
	the sequence. For example, Gawrychowski and Janczewski~\cite{gj21} 
	design a dynamic algorithm to quickly compute 
	the approximate $\LIS$ for any range of indices in $\mS$. 
	In fact, 
	Gawrychowski and Janczewski go further and design a dynamic algorithm 
	which maintains the approximate $\LIS$ for 
	a special class of {\em dyadic rectangles} where the 
	query\footnote{Consider  mapping  the $i^{\text{th}}$ element $a_i \in \mS$  to a 2D point $(i, a_i)$, to contextualize the notion of 2D queries. } is an axis-aligned rectangle in 2D and the goal is to compute the approximate $\LIS$
	of the 2D points (from the above mapping) which lie inside the rectangle.
	
	Motivated by this, in this paper we 
	study  several natural
	variants of the $\LIS$ problem where the 1D queries impose
	restriction on 
	the range of the indices and the 2D queries impose range restriction
	on both the indices and the  values in the sequence. For each of these variants of the $\LIS$ problem there is a data structure counterpart that can be studied as well, and these results are later listed in Table~\ref{tableDS}.
	
	\paragraph{Vanilla Problem: One dimensional range $\LIS$ problem.}
	
	The first problem is the {\em 1D range longest increasing subsequence} problem 
	($\ORLIS$). In the $\ORLIS$ 
	problem, we are given as input a sequence $\mS=(a_1,a_2,\ldots,a_n)$ 
	of $n$ real numbers and  a collection
	$\Q$ of $m$ query ranges, where each $\query \in \Q$ is a range (or an interval) 
	$[x_1,x_2] \subseteq [1,n]$. For each $\query=[x_1,x_2] \in \Q$, 
	the goal is to {\em report} the $\LIS$ of the sequence $ \mS \cap [x_1,x_2]=(a_{x_1}, a_{x_1 +1}\ldots, a_{x_2-1},a_{x_2})$ (i.e., the output must be the longest increasing subsequence in $ \mS \cap [x_1,x_2]$).
	The length of the $\LIS$ of the sequence $\mS \cap \query$ is denoted by 
	$\LIS(\mS \cap \query)$. Throughout the paper, we  abuse notation for simplicity by using 
	$\LIS$ to refer to both the subsequence and its length. We clarify the intended meaning wherever it may be unclear.
	
	
	A straightforward approach 
	to solve the $\ORLIS$ problem would 
	involve no preprocessing: for each range $[x_1,x_2]$ in $\Q$, we will 
	retrieve the elements
	$a_{x_1}, a_{x_1 +1}\ldots, a_{x_2-1},a_{x_2}$ 
	and report their $\LIS$. This would require
	$\Omega(mn\log n)$ time in the worst-case (for instance when many of the ranges in 
	$\Q$ are of $\Omega(n)$ length). 
	In a remarkable work,
	Tiskin~\cite{Tiskin08a,Tiskin08b,Tiskin10} broke the quadratic barrier, by designing 
	an $O((n+m) \log n)$ time 
	algorithm, albeit for the computationally easier {\em length} version of 
	the 
	$\ORLIS$ problem, in which for all 
	$\query \in \Q$, the goal is to only
	output the length of the $\LIS$ in the range $\query$ (i.e., $\LIS(\mS \cap \query)$).
	While the algorithm designed by Tiskin 
	is tailored to handle the length version of 
	the $\ORLIS$ problem, in a subsequent work to the current paper, Gawrychowski, Gorbachev, and Kociumaka \cite{GGK24} extend Tiskin's algorithm to handle the reporting version as well by providing an algorithm which answers the reporting version of  the 
	$\ORLIS$ problem in $O(n(\log n)^3+m+k)$ time.
	
	Since the vanilla version of the range $\LIS$ problem is  completely understood (up to log factors), we focus on the following three variants.

	\paragraph{Problem-I: Two dimensional range $\LIS$ problem.}
	The next problem is the {\em 2D range longest
		increasing subsequence} problem ($\TRLIS$) which is a 
	natural generalization of the $\ORLIS$ problem.
	As before, we are given as input a sequence
	$\mS=(a_1,a_2,\ldots,a_n)$. 
	Consider a mapping where the $i^{\text{th}}$ element 
	$a_i \in \mS$ is mapped to a 2D point $p_i=(i, a_i)$.
	Let $\mP=\{p_1,p_2,\ldots,p_n\}$ be a collection 
	of these $n$  points. We are also 
	given as input a collection $\Q$ of $m$ query ranges, 
	where each range in $\Q$ is an axis-aligned 
	rectangle in 2D. 
	For each $\query \in \Q$, the goal is to 
	report the $\LIS$ of 
	$\mP \cap \query$, i.e., the points of $\mP$
	lying inside $\query$. See Figure~\ref{fig:2d-rlis-crlis}(a) for an example.
	
	Once again the naive approach to solve the 
	problem would take $O(mn\log n)$ time, 
	where for each range $\query \in \Q$, we 
	will scan the points in $\mP$ to filter
	out the points lying inside $\query$
	and then compute their $\LIS$.
	One might be tempted to use {\em orthogonal 
		range searching} data structures~\cite{aal09,ae98,clp11,nr23,bcko08} to efficiently
	report $\mP \cap \query$, but in the worst-case
	we could have many queries with $|\mP \cap \query|=
	\Omega(n)$.
	In this paper, even for the $\TRLIS$ problem, 
	we succeed to improve on the basic quadratic runtime algorithm
	when $m$ is roughly $\Omega(\sqrt{n})$. 
	We obtain the following result where the $\tilde O(\cdot)$ notation hides polylog factors.
	
	\begin{figure}[h]
		\begin{center}
			\includegraphics[scale=1.2]{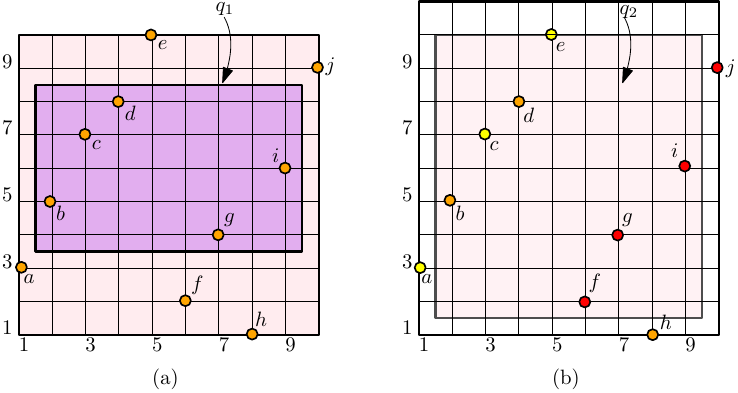}
		\end{center}
		\caption{(a) Each element in the sequence
			$\mS=(3,5,7,8,10,2,4,1,6,9)$ is mapped 
			to a 2D point. For example, the element 
			$7$ is mapped to $c=(3,7)$. The $\TRLIS$ of the points 
			lying inside $\query_1$ is $(b,c,d)$. 
			(b) The same input sequence 
			$\mS=(3,5,7,8,10,2,4,1,6,9)$ is shown 
			here. The set of colors $\C=\{\text{orange, yellow, red}\}$. Points $b,d \text{ and }h$ 
			have orange color, points $a,c\text{ and } e$
			have yellow color, and points 
			$f,g, i \text{ and } j$ have red color.
			The $\CRLIS$ of the points lying inside
			$\query_2$ (the shaded region)
			is red color realized by 
			the sequence $(f,g,i)$.}
		\label{fig:2d-rlis-crlis}
	\end{figure}
	
	\begin{theorem}\label{thm:TRLIS}
		\begin{sloppypar}There is a randomized algorithm to 
			answer the reporting version of 
			$\TRLIS$ problem in 
			$\tilde O(\sqrt{n}\cdot  (m+n))+O(k  )$ time, where $k$ is the cumulative length of the $m$ output subsequences.
			The bound on the running time and 
			the correctness of the solution 
			holds with high probability.\end{sloppypar}
	\end{theorem}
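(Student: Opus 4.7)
The plan is to choose a threshold $\tau = \lceil \sqrt{n} \rceil$ and partition the answers into a \emph{short} regime (where $|\LIS(\mP\cap\query)| \le \tau$) and a \emph{long} regime. I would run separate algorithms tailored to each regime and return the longer of the two outputs per query, sidestepping the need to pre-classify queries.

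\textbf{Short regime.} For each length $t\in\{1,\ldots,\tau\}$ and each point $p_j\in\mP$ I would precompute, via dynamic programming, information describing length-$t$ increasing chains ending at $p_j$. The key structural fact is that since increasing chains are monotone in both coordinates, a chain lies inside an axis-aligned rectangle $\query$ if and only if both of its endpoints do; therefore existence of a length-$t$ chain inside $\query$ reduces to a constant-dimension orthogonal range query over precomputed (start, end) witness pairs. Using persistent range trees I would answer each such query in polylogarithmic time. For a query I iterate $t$ from $\tau$ downward until a length-$t$ witness is found, then recover the subsequence via back-pointers in $O(t)$ additional time. The preprocessing is $\tilde O(n\tau) = \tilde O(n^{3/2})$ and each query costs $\tilde O(\sqrt n)$ plus output.

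\textbf{Long regime.} I would sample each point of $\mP$ independently with probability $\rho = \Theta(\tau^{-1}\log n)$, producing a set $R$ with $|R| = \tilde O(\sqrt n)$; a Chernoff plus union bound over queries ensures that every length-$(>\tau)$ chain in any $\mP\cap\query$ contains at least one element of $R$ with high probability. For each pivot $r\in R$ I would precompute, in $\tilde O(n)$ time, a data structure that, given any query rectangle $\query$ containing $r$, returns the longest increasing chain in $\mP\cap\query$ passing through $r$. This decomposes at $r$ into the concatenation of an LIS in the southwest quadrant of $r$ ending at $r$ (restricted to $\query$), the point $r$ itself, and an LIS in the northeast quadrant starting at $r$ (restricted to $\query$). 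Summed over the $\tilde O(\sqrt n)$ pivots, preprocessing is $\tilde O(n^{3/2})$; each query iterates over the $\tilde O(\sqrt n)$ pivots that lie inside it, yielding $\tilde O(m\sqrt n)$ total query time plus $O(k)$ for reporting.

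\textbf{Main obstacle.} The principal technical difficulty is the per-pivot data structure of the long regime, since a direct DP costs $\Theta(n)$ per query and cannot be afforded per (sample, query) pair. My plan is to exploit monotonicity: for a fixed pivot $r$, as the two free corners of $\query$ slide, the length of the optimal chain through $r$ is a staircase function over a 2D domain whose witness chain changes by only $\tilde O(1)$ local edits per breakpoint. Maintaining these staircases using persistent geometric structures together with fractional cascading should deliver $\mathrm{polylog}(n)$ query time per pivot. A secondary concern is ensuring that recovery of the actual subsequence costs only $O(\text{length})$ so that reporting totals $O(k)$; this is arranged by storing back-pointers alongside the witness staircases and charging traversal work only against emitted output tokens. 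The randomized correctness and running-time bounds then follow from a union bound over the sampling event across all $m$ queries.
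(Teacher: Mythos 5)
Your high-level architecture (threshold $\tau=\sqrt{n}$, two regimes run in parallel, return the longer answer) matches the paper, and your long-regime sampling argument is the paper's second technique. However, both regimes have a genuine gap as written.

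In the short regime, the step ``existence of a length-$t$ chain inside $\query$ reduces to a constant-dimension orthogonal range query over precomputed (start, end) witness pairs'' does not survive scrutiny. For a fixed endpoint $p_j$ and length $t$, the set of feasible start points of length-$t$ chains ending at $p_j$ is not captured by a single witness: it forms a 2D Pareto frontier (one chain may start at a point with large index but small value, another at a point with small index but large value, and which one fits in $\query$ depends on the query corner). Storing one witness per pair $(p_j,t)$ is therefore incorrect, while storing all witness pairs (or even just the frontiers) can cost $\Theta(n^2)$ per length $t$ --- e.g.\ an antichain of $n/2$ points each dominated by $n/2$ later points already forces $\Theta(n^2)$ pairs for $t=2$ --- blowing past your $\tilde O(n\tau)$ preprocessing budget. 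The paper escapes this by a divide-and-conquer on a vertical line $x^*$: for queries crossing $x^*$, the start-index constraint is satisfied automatically, so only a \emph{one-dimensional} optimum (the ``highest base'' of a chain ending at $p_i$, resp.\ the ``lowest peak'' of a chain starting at $p_i$) needs to be stored per (point, length), and the two halves are stitched using a monotonicity property and binary search. Some such additional idea is needed; your sketch as stated would fail on correctness or on running time.

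In the long regime, the decomposition at a sampled pivot is exactly the paper's, but the piece you flag as the ``main obstacle'' --- answering, for a pivot $r$ and query $\query\ni r$, the length of the $\LIS$ of $\mP\cap\query\cap NE(r)$ --- does not require staircase maintenance, persistence, or fractional cascading. The observation you already used in the short regime closes it: precompute once, by a standard $O(n\log n)$ LIS computation on $\mP\cap NE(r)$, the weight $w(p_j)=$ length of the longest chain from $r$ to $p_j$; since an increasing chain with both endpoints $r,p_j$ inside the axis-aligned rectangle $\query$ lies entirely inside $\query$, the desired quantity is simply $\max\{w(p_j)\mid p_j\in \mP\cap\query\cap NE(r)\}$, an orthogonal range-max query answerable in polylogarithmic time. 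Your speculative staircase/breakpoint scheme is unsubstantiated (it is not clear the witness chain changes by $\tilde O(1)$ edits per breakpoint), so as written this part of the proof is incomplete even though the correct resolution is within reach of the tools you already invoke.
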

	
	\paragraph{Problem-II: Colored 1D
		range $\LIS$ problem.}
	In the geometric data structures literature, 
	{\em colored range searching}~\cite{kn11,nv13,ptsnv14,m02,lps08,ksv06,bkmt95,agm02,gjs95,jl93,lp12,n14,sj05,gjrs18}
	is a well-studied
	generalization of traditional range searching problems. In the colored setting, the geometric objects are partitioned
	into disjoint groups (or colors), and given 
	a query region $\query$, the typical goal is to efficiently
	report~\cite{chn20,cn20,ch21,lw13}, or count~\cite{krsv07}, or approximately count~\cite{r17}
	the number of colors intersecting  $\query$, or
	find the majority color inside $\query$~\cite{chan2014linear,kms05}.
	
	In the same spirit, we study the {\em colored 1D range 
		longest increasing subsequence 
	} problem ($\CORLIS$). In addition to the sequence 
	$\mS$ and 1D range queries $\Q$ given as input 
	to the $\ORLIS$ problem, in the $\CORLIS$ problem we are
	additionally given a coloring of $\mS$ using the color set $\C$. 
	Each element $a_i \in \mS$ has a {\em color}
	chosen from $\C$ (the corresponding 2D point $p_i$
	also has the same color). For all $c\in \C$, 
	let $\mP_c \subseteq \mP$
	denote the set of points with color $c$ and for any $\query \in \Q$, 
	with a slight abuse of notation, we will 
	use $\LIS(\mP_c \cap \query)$ to denote the length 
	of the $\LIS$ of $\mP_c \cap \query$. Moreover, we say a subsequence is \emph{monochromatic}, if all the elements in the subsequence have the same color. 
	For each $\query \in \Q$, the goal is then to report the longest monochromatic increasing subsequence whose length is equal to  $\underset{c\in C}{\max}\ \LIS(\mP_c \cap \query)$. See Figure~\ref{fig:2d-rlis-crlis}(b) 
	for an example in 2D. We obtain the following 
	result.

	\begin{theorem}\label{thm:CORLIS}
		There is a randomized algorithm to 
		answer the reporting version of the 
		$\CORLIS$ problem in $\tilde O(\sqrt{n}\cdot (m+n))+O(k  )$ time, where $k$ is the cumulative length of the $m$ output subsequences.
		The bound on the running time and 
		the correctness of the solution 
		holds with high probability. 
	\end{theorem}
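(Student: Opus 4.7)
The plan is to combine the large-$\LIS$/small-$\LIS$ dichotomy underlying Theorem~\ref{thm:TRLIS} with a second dichotomy over colors. Fix the threshold $T=\sqrt n$ and call a color $c\in\C$ \emph{heavy} if $|\mP_c|\ge T$ and \emph{light} otherwise. There are at most $\sqrt n$ heavy colors, while every light color trivially satisfies $\LIS(\mP_c\cap \query)\le \sqrt n$ for every query $\query$. A query is answered from two data structures—one per color class—and random sampling at rate $\tilde\Theta(1/\sqrt n)$ is used to decide, per query, whether the optimum is realized by a heavy color of length $\ge T$ or by a (possibly light) color of length $<T$, so only one of the two regimes needs to be probed per query.

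\emph{Heavy colors.} For each heavy color $c$, restrict $\mS$ to the subsequence induced by the positions of color $c$ and build the one-dimensional specialization of the reporting structure developed in Theorem~\ref{thm:TRLIS}; this takes $\tilde O(n_c)$ time for color $c$, so $\sum_c n_c\le n$ yields total preprocessing $\tilde O(n)$. Answering a query against a single heavy color costs $\tilde O(\sqrt n)$ plus the output length, and summing over the $\le\sqrt n$ heavy colors and $m$ queries gives $\tilde O(n+m\sqrt n)+O(k_{\mathrm{heavy}})$.

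\emph{Light colors.} Since the monochromatic $\LIS$ on a light color inside any query has length at most $\sqrt n$, I would reuse the short-$\LIS$ sub-procedure behind Theorem~\ref{thm:TRLIS}, lifting the monochromatic constraint into a geometric attribute. Concretely, for each length $\ell\in\{1,\ldots,\sqrt n\}$ and each light color $c$, precompute via a per-color patience-sort-style DP every tuple $(s,i,v,c,\ell)$ with $s,i$ the starting/ending indices and $v$ the ending value of a color-$c$ increasing subsequence of length $\ell$; the number of such tuples is $\tilde O(n^{3/2})$, since each of the $\sqrt n$ levels contributes at most $n_c$ entries per color and $\sum_c n_c\le n$. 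Store the tuples in a color-agnostic 2D range-reporting tree indexed by $(s,i)$ with color and length as payload attributes. A query $\query=[x_1,x_2]$ then binary-searches over $\ell$ for the largest level admitting a tuple with $s\ge x_1$ and $i\le x_2$; the realizing subsequence is reconstructed through back-pointers in $\tilde O(1)$ time per reported element.

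The main obstacle is ensuring that a DP entry used by a query genuinely arises from positions \emph{inside} $\query$, since the natural left-to-right DP allows elements outside $[x_1,x_2]$ to participate, and that cross-color querying happens in $\tilde O(\sqrt n)$ per query rather than $\Omega(|\C|)$. The resolution is to tag every DP entry with the starting index $s$ of a realizing subsequence—so that $s\ge x_1$ and $i\le x_2$ forces the whole witness into $\query$—and to merge the $|\C|$ per-color tables into one range-reporting structure whose entries carry color as an attribute, so that the color class is never enumerated explicitly. Establishing correctness of the level-by-level binary search in the colored setting, and verifying that the random sampling indeed certifies the regime for each query with high probability, are the technical cores that parallel the proof of Theorem~\ref{thm:TRLIS}.
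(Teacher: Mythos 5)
Your light-color half is sound and close in spirit to the paper's ``third technique'': the paper also caps light colors at $\Delta=\sqrt n$ points, precomputes $\sum_c O(|\mP_c|^2)=O(n\Delta)$ witness pairs (stored as weighted intervals queried by a containment range-max structure rather than your per-level dominance structures with a binary search over $\ell$), and both organizations give $\tilde O(n^{3/2})$ preprocessing and $\tilde O(1)$ per query. The tagging of each DP entry with a realizing starting index $s$, so that $s\ge x_1$ and $i\le x_2$ certify containment of the whole witness, is exactly the right fix for the containment issue you flag.

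The genuine gap is in the heavy-color half, and it is an arithmetic one that the proposal's own numbers expose: you build one Theorem~\ref{thm:TRLIS}-style structure per heavy color, whose query time is $\tilde O(\sqrt{n_c})$ (and whose preprocessing is $\tilde O(n_c^{3/2})$, not $\tilde O(n_c)$ as claimed, though that sum still fits the budget). Probing all $\le\sqrt n$ heavy colors on each of the $m$ queries therefore costs $m\cdot\sum_{c\ \mathrm{heavy}}\tilde O(\sqrt{n_c})$, which by Cauchy--Schwarz is $\tilde O(m\,n^{3/4})$ and in the worst accounting $\tilde O(mn)$ --- not the $\tilde O(n+m\sqrt n)$ you assert. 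The sampling-based ``regime certification'' does not rescue this: even when the optimum is a long monochromatic chain of a heavy color, you still need a sublinear way to evaluate that color, and when the optimum is a \emph{short} chain of a \emph{heavy} color (e.g.\ a narrow query window), it falls into neither your light-color structure nor the long-chain regime, so you are forced back to querying every heavy color's structure. What is missing is a per-heavy-color length oracle with \emph{polylogarithmic} query time; the paper obtains exactly this by invoking Tiskin's $O((n+m)\log n)$ length structure (plus the reporting structure of \cite{GGK24}) on each heavy color's subsequence, so that scanning all $O(\sqrt n)$ heavy colors costs only $\tilde O(\sqrt n)$ per query. Without such an oracle, or without a genuinely combined treatment of the heavy colors (as the paper does in 2D via its colored small-/large-$\LIS$ techniques with a different parameter balance, yielding only $n^{2/3}$ there), the claimed $\tilde O(\sqrt n\cdot(m+n))$ bound does not follow from your construction.
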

	
	We complement the above result with a conditional lower bound that indicates that the above runtime for $\CORLIS$ might be (near) optimal. 
	
	\begin{theorem}\label{thm:colorLB}
		Assuming the Combinatorial Boolean Matrix Multiplication Hypothesis, for every $\varepsilon>0$ and $C\in\mathbb{N}$, there is no combinatorial algorithm to 
		answer the reporting version of the 
		$\CORLIS$ problem in $k/2+C\cdot n^{0.5-\varepsilon}\cdot (m+n)$ time, where $k$ is the cumulative size of the $m$ outputs. The lower bound continues to hold  even when we are only required to report the color of the longest monochromatic increasing subsequence for each range query. 
	\end{theorem}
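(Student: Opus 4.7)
The plan is to reduce from Combinatorial Boolean Matrix Multiplication. Given two $N \times N$ boolean matrices $A, B$, I will construct a $\CORLIS$ instance with sequence length $n = \Theta(N^2)$, $m = \Theta(N^2)$ range queries, and cumulative output $k = O(N^2)$, so that reading the winning color of each query determines one entry of the product $C = AB$. A $\CORLIS$ algorithm achieving the hypothetical bound $k/2 + C n^{0.5-\varepsilon}(m+n)$ would then solve BMM in time $O(N^2) + C N^{3-2\varepsilon} = o(N^3)$, contradicting \CBMMH.

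Concretely, I would use $N$ colors $c_1, \ldots, c_N$ indexed by the middle dimension plus a sentinel color. The sequence $\mS$ is split into an $A$-region on positions $[1, N^2]$ and a $B$-region on positions $[N^2+1, 2N^2]$. For each $(i,k)$ with $A[i,k] = 1$, place a marker of color $c_k$ at position $(i-1)N + k$; for each $(k,j)$ with $B[k,j] = 1$, place a marker of color $c_k$ at position $N^2 + (j-1)N + k$. Values are chosen so that within each color all $A$-values lie strictly below all $B$-values and both halves are strictly decreasing in the natural order. For each $(i, j) \in [N]^2$ the query is the interval $[(i-1)N + 1,\; N^2 + jN]$; then the longest monochromatic increasing subsequence in color $c_k$ has length at most $2$ and equals $2$ exactly when the range contains both an $A$-marker and a $B$-marker of color $c_k$.

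The main obstacle is that this bare construction only recovers the relaxed ``quadrant-OR'' matrix
\[
D[i, j] \;=\; \bigvee_k \bigl(\exists i' \geq i : A[i', k] = 1\bigr) \wedge \bigl(\exists j' \leq j : B[k, j'] = 1\bigr),
\]
which can be computed in $\tilde O(N^2)$ time via 2D orthogonal dominance on the points $\{(i^*_k, j^*_k)\}_k$ with $i^*_k = \max\{i': A[i', k] = 1\}$ and $j^*_k = \min\{j': B[k, j'] = 1\}$; so $D$ is not BMM-hard and the naive reduction is insufficient. The heart of the proof is to refine the gadget so that the winning color at query $(i, j)$ certifies $C[i, j]$ rather than $D[i, j]$. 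I would achieve this by (a) setting the values so that any correct $\CORLIS$ algorithm canonically reports the $A$-marker of smallest row index and the $B$-marker of largest column index in the range, and (b) inserting auxiliary ``anchor'' markers of color $c_k$ between the two regions, configured so that the monochromatic LIS in color $c_k$ is boosted to length $3$ precisely when $A[i,k] = B[k,j] = 1$. Under this arrangement the winning color of query $(i, j)$ identifies a witness $k$ when $C[i, j] = 1$, while $C[i, j] = 0$ is signalled by the sentinel color or a length-$\leq 2$ LIS.

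The principal technical difficulty, which I expect to be the main obstacle, is to design the anchor markers so that the length-$3$ boost occurs only in the desired cases across all $N^2$ queries simultaneously, while keeping $n, m, k = \Theta(N^2)$. Once this is achieved, the runtime accounting and the contradiction with \CBMMH are immediate, and the stated ``color-only'' strengthening follows because the winning color already encodes the witness middle-index $k$.
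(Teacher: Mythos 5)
Your reduction is left incomplete at exactly the point where all the difficulty lies, so as written it is not a proof. You correctly diagnose that the naive two-region construction with interval queries $[(i-1)N+1,\, N^2+jN]$ only certifies the ``quadrant-OR'' predicate, because such an interval necessarily contains the $A$-markers of every row $i'\ge i$ and the $B$-markers of every column $j'\le j$. But the proposed repair---``anchor'' markers that boost the monochromatic $\LIS$ of color $c_k$ to length $3$ precisely when $A[i,k]=B[k,j]=1$---is never constructed, and you yourself flag it as the principal obstacle. There is a structural reason to doubt it can be built in the form you describe: any marker placed between the two regions lies in \emph{every} query interval (all of them contain the entire middle stretch), so whether it extends a color-$c_k$ chain cannot depend on the query pair $(i,j)$; the only query-dependent information a 1D interval carries is which suffix of the $A$-region and which prefix of the $B$-region it retains, and that is exactly the quadrant information you already showed is insufficient. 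Overcoming this obstruction is the real content of a BMM-hardness reduction for range problems, and your writeup defers it entirely.

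The paper avoids the issue by not reducing from BMM directly: it reduces from the \emph{range mode} problem, whose BMM-hardness (in the precise form ``range mode in $O(n^{c}(m+n))$ time implies BMM in $O(n^{1+c})$ time'') is imported as a black box from Chan et al.\ \cite{chan2014linear}. The reduction from range mode to $\CORLIS$ is then a one-line gadget: replace the $i$-th element $z_i$ by $a_i:=z_i+f_i\cdot n$, where $f_i$ is the number of earlier occurrences of $z_i$, and color $a_i$ by $z_i$. Within any query interval the occurrences of a fixed value form an increasing monochromatic subsequence, so the longest monochromatic increasing subsequence has length equal to the mode frequency and its color is the mode; the $k/2$ term is absorbed because an algorithm running in $k/2+Cn^{1/2-\varepsilon}(m+n)$ time must in particular write $k$ output symbols, forcing $k\le 2Cn^{1/2-\varepsilon}(m+n)$. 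To salvage your approach you would essentially have to re-derive the Chan et al.\ reduction inside your gadget; reducing from range mode instead makes the whole argument elementary.
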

	
	The Combinatorial Boolean Matrix Multiplication Hypothesis (\CBMMH) roughly asserts that Boolean matrix multiplication of two  $\sqrt{n}\times\sqrt{n}$ matrices cannot be computed by any combinatorial algorithm running in time $n^{1.5-\varepsilon}$, for any constant $\varepsilon>0$ (see Definition~\ref{def:CBMMH} for a formal statement). Here ``combinatorial algorithm''  is typically 
	defined as  “non-Strassen-like algorithm” (as defined in
	\cite{ballard2013graph}), and this captures all known fast matrix multiplication algorithms.
	
	While \CBMMH is widely explored since the 1990s \cite{satta1994tree,lee2002fast}, there is no strong consensus in the computer science community about its plausibility. In a recent breakthrough, Abboud et al.\ \cite{abboud2024new}, have come tantalizingly close to even refuting it. All that said, \CBMMH remains a widely used hypothesis for proving conditional lower bounds~\cite{roditty2011dynamic,abboud2014popular,henzinger2015unifying}. At the very least, Theorem~\ref{thm:colorLB} provides evidence that any algorithm for $\CORLIS$ that  is significantly faster than the runtime given in Theorem~\ref{thm:CORLIS}, must be highly non-trivial.
	
	The proof of Theorem~\ref{thm:colorLB} follows as a simple consequence of a \CBMMH based conditional lower bound in \cite{chan2014linear} for computing the mode of a sequence subject to range queries. 
	
	\paragraph{Problem-III: Colored 2D range $\LIS$ problem.}\begin{sloppypar}
		The most general problem that we study in this paper 
		is the {\em colored 2D range 
			longest increasing subsequence 
		} problem ($\CRLIS$). The queries in $\Q$ will be
		axis-aligned rectangles.
		For each $\query \in \Q$, the goal is then to report the longest monochromatic increasing subsequence which attains  $\underset{c\in C}{\max}\ \LIS(\mP_c \cap \query)$.
		See Figure~\ref{fig:2d-rlis-crlis}(b) 
		for an example. We obtain the following result.
	\end{sloppypar}

	\begin{theorem}\label{thm:CRLIS}
		There is a randomized algorithm to 
		answer the reporting version of the 
		$\CRLIS$ problem in $\tilde O(n^{2/3}\cdot (m+n))+O(k  )$ time, where $k$ is the cumulative length of the $m$ output subsequences.
		The bound on the running time and 
		the correctness of the solution 
		holds with high probability. 
	\end{theorem}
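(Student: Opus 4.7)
The plan is to combine Theorem~\ref{thm:TRLIS} with the heavy/light color decomposition hinted at in the abstract. Fix the threshold $\tau = n^{2/3}$ and call a color $c$ \emph{heavy} when $|\mP_c|\ge \tau$ and \emph{light} otherwise; at most $n/\tau = n^{1/3}$ colors are heavy and each light color has fewer than $n^{2/3}$ points.

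For the heavy colors, I would instantiate Theorem~\ref{thm:TRLIS} once per heavy color $c$ on the point set $\mP_c$, answering all $m$ queries restricted to color $c$ in time $\tilde O(\sqrt{n_c}(m+n_c))+O(k_c)$. Cauchy-Schwarz yields $\sum_{c\text{ heavy}}\sqrt{n_c}\le \sqrt{(n/\tau)\cdot n}=n^{2/3}$, while $\sum_{c\text{ heavy}} n_c^{3/2}\le \sqrt{n}\cdot\sum_{c\text{ heavy}} n_c\le n^{3/2}$. Hence the heavy stage costs $\tilde O((m+n)\,n^{2/3})+O(k_{\text{heavy}})$.

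For the light colors I exploit the following monotonicity property: if $p_s=(i_s,v_s)$ and $p_e=(i_e,v_e)$ are the endpoints of a monochromatic increasing subsequence of color $c$, then every axis-aligned rectangle containing both $p_s$ and $p_e$ also contains the entire chain. For each light color $c$ and each ordered pair $(p_s, p_e)$ of points in $\mP_c$ with $i_s<i_e$ and $v_s<v_e$, let $L_c(p_s, p_e)$ denote the length of the longest increasing subsequence of color $c$ from $p_s$ to $p_e$; this is computed by running a standard LIS DP from each candidate start $p_s$ over the suffix of $\mP_c$ consisting of points strictly greater than $p_s$ in both coordinates, costing $O(n_c^{2}\log n_c)$ per color. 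I then store each tuple $(i_s, v_s, i_e, v_e)$ as a $4$D point tagged with weight $L_c(p_s,p_e)$ (and singletons $(i_p,v_p,i_p,v_p)$ with weight $1$ to cover length-$1$ answers). A query rectangle $[x_1,x_2]\times[y_1,y_2]$ is then answered by the $4$D range-max query returning the largest weight over tagged points with $x_1\le i_s$, $i_e\le x_2$, $y_1\le v_s$, $v_e\le y_2$; a layered orthogonal range-max tree handles this in $\tilde O(1)$ per query. The total number of tagged points across light colors is $\sum_{c\text{ light}} O(n_c^{2})=O(n^{5/3})$, since $\sum_{c\text{ light}} n_c\le n$ and $n_c<n^{2/3}$. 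Thus the light stage costs $\tilde O(n^{5/3}+m)\le \tilde O(n^{2/3}(m+n))$ plus $O(k_{\text{light}})$ for reporting.

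The main obstacle will be arranging the reporting step so that, once the maximizing tagged $(p_s^\star, p_e^\star)$ of some color $c^\star$ has been identified, the realizing length-$L^\star$ subsequence is produced in $O(L^\star)$ time per query in order to keep the total reporting cost at $O(k)$. I would accomplish this by attaching predecessor pointers to every tagged pair during the DP and performing the actual backtracking only once the winning tagged point is returned by the $4$D query. Taking the maximum between the heavy and light answers per query produces the final output, and randomness enters only through the $\TRLIS$ subroutine invoked in the heavy stage, so the overall runtime $\tilde O(n^{2/3}(m+n))+O(k)$ and correctness hold with high probability.
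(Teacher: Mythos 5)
Your argument is correct, and for the heavy colors it takes a genuinely different route from the paper. Both proofs treat the light colors identically: enumerate the $O(\sum_c n_c^2)=O(n^{5/3})$ endpoint pairs per light color (where $n_c=|\mP_c|$), weight each bounding rectangle by the chain length, and answer queries with an orthogonal range-max structure — this is exactly Theorem~\ref{thm:third-tec-2D} with $\Delta=n^{2/3}$. For the heavy colors, however, the paper introduces a second threshold $\tau=n^{1/3}$ and splits each query into the small-$\LIS$ regime (the colored dynamic-programming structure of Theorem~\ref{thm:col-2d-small-lis}) and the large-$\LIS$ regime (the colored random-sampling structure of Theorem~\ref{thm:large-LIS-col-2d}), balancing $\tau$ and $\Delta$ jointly. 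You instead invoke Theorem~\ref{thm:TRLIS} as a black box once per heavy color and let Cauchy--Schwarz do the balancing: $\sum_{c\text{ heavy}}\sqrt{n_c}\le\sqrt{n^{1/3}\cdot n}=n^{2/3}$ and $\sum_{c\text{ heavy}}n_c^{3/2}\le n^{3/2}$. This is more modular — you never need the colored variants of the first and second techniques — and it cleanly sidesteps the paper's own objection to per-color decomposition, which only bites when the number of colors is large; you pay it only for the at most $n^{1/3}$ heavy colors. Both routes land on the same $\tilde O(mn^{2/3}+n^{5/3})+O(k)$ bound, and your high-probability claim survives the union bound over $n^{1/3}$ invocations since each heavy color has $n_c\ge n^{2/3}$ points.

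One accounting point needs to be made precise. If Theorem~\ref{thm:TRLIS} is used as a pure reporting black box on each heavy color, it outputs a subsequence for \emph{every} query and \emph{every} heavy color, and $\sum_{c\text{ heavy}}k_c$ is not bounded by $k$: with $n^{1/3}$ heavy colors each containing an increasing run of length $n^{2/3}$ inside every query, this sum is $\Theta(mn)$. You must therefore run the length-only computation for each heavy color first (the paper's sub-theorems compute lengths and defer reporting to ``appropriate bookkeeping''), take the per-query maximum over all heavy colors and the light-color answer, and backtrack only for the single winning color of each query — just as you already propose to do on the light side. With that discipline the reporting cost is $O(k)$ and the stated bound goes through.
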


	We remark that the conditional lower bound of Theorem~\ref{thm:colorLB} continues to hold here too. 

	In Table~\ref{table}, we have summarized the state-of-the-art upper and (conditional) lower bounds for the various variants of $\RLIS$ discussed in this paper. 
	
	\begin{table}[]
		\begin{center}{\renewcommand{\arraystretch}{1.45}
				\resizebox{0.85\linewidth}{!}{
					\begin{tabular}{||c|c|c||}
						\cline{1-3}
						Problem &  Upper Bound  & Lower Bound   \\ [0.5ex]\hline\hline
						\multirow{2}{*}{1D-Range LIS}  &$O(n(\log n)^{3}+m +k)$ & $\Omega(m+n+k)$  \\
						&(Theorem 5.13 in \cite{GGK24})&(Trivial) \\\hline
						\multirow{2}{*}{2D-Range LIS}  &  $\tilde O(\sqrt{n}\cdot (m+n)) +O(k)$ &  $\Omega(m+n+k)$   \\
						&(Theorem~\ref{thm:TRLIS})&(Trivial) \\\hline
						\multirow{2}{*}{Colored 1D-Range LIS}  & $\tilde O(\sqrt{n}\cdot (m+n)) +O(k)$  &  $\Omega\left(n^{1/2-o(1)}(m+n)+k\right)$  \\
						&(Theorem~\ref{thm:CORLIS})&(Theorem~\ref{thm:colorLB}) \\\hline
						\multirow{2}{*}{Colored 2D-Range LIS}  &   $\tilde O(n^{2/3}\cdot (m+n)) +O(k)$ &  $\Omega\left(n^{1/2-o(1)}(m+n)+k\right)$  \\
						&(Theorem~\ref{thm:CRLIS})&(Theorem~\ref{thm:colorLB}) \\\hline
					\end{tabular}
				}
			}
		\end{center}
		\vspace{-0.1cm}
		\caption{  \footnotesize{State-of-the-art upper and lower bounds are listed for the variants of $\RLIS$ problem considered in this paper. In the table, all upper bounds from this paper are randomized, and all lower bounds from this paper are conditional and only against combinatorial algorithms. Also, recall that $n$ is the length of input sequence, $m$ is the number of range queries, and $k$ is the cumulative size of the $m$ outputs. 
		}}\label{table}
	\end{table}

	\paragraph{Data Structure Counterpart of the Three Variants of $\LIS$ Problem.}
	A natural data structure counterpart question to the aforementioned four problems is to design a data structure for these problems and measure the tradeoff in the time needed for preprocessing the input sequence versus the time needed to answer a single range query. 
	In Table~\ref{tableDS}, we have summarized the preprocessing and query time bounds that can be derived from the proofs of Theorems~\ref{thm:TRLIS},~\ref{thm:CORLIS},~and~\ref{thm:CRLIS}. 
	
	\begin{table}[h]
		\begin{center}{\renewcommand{\arraystretch}{1.45}
				\resizebox{0.85\linewidth}{!}{
					\begin{tabular}{||c|c|c|c||}
						
						\cline{1-4}
						
						Problem &  Preprocessing Time  & Query time &Reference   \\ [0.5ex]\hline\hline
						
						\multirow{1}{*}{1D-Range LIS}  &  $ O(n (\log n)^3 )$ &  $ O(k_q)$  & \cite{GGK24} \\\hline
						
						\multirow{1}{*}{2D-Range LIS}  &  $\tilde O(n^{3/2})$ &  $\tilde O(n^{1/2})+ O(k_q)$  &This Paper \\\hline
						
						\multirow{1}{*}{Colored 1D-Range LIS}  & $\tilde O(n^{3/2})$  &  $\tilde O(n^{1/2}) + O(k_q)$ &This Paper \\\hline
						
						\multirow{1}{*}{Colored 2D-Range LIS}  &   $\tilde O(n^{5/3})$ &  $\tilde O(n^{2/3}) + O(k_q)$  &This Paper\\\hline
						
					\end{tabular}
					
				}
			}
		\end{center}
		\vspace{-0.1cm}
		\caption{  \footnotesize{State-of-the-art upper bounds are listed for the data structure variants of $\RLIS$ problems considered in this paper. In the table, all the bounds are randomized. Also, $k_q$ is the length of the output of a single query.
		}}\label{tableDS}
	\end{table}

	Finally, we note that the proof of Theorem~\ref{thm:colorLB} also implies that assuming \CBMMH, for every $\varepsilon>0$, there is no data structure for the reporting version of the 
	$\CORLIS$ problem (or the $\CRLIS$ problem) which can be preprocessed using a combinatorial algorithm in $O(n^{1.5-\varepsilon})$ time such that each query can be answered using a combinatorial algorithm in $\frac{k_q}{2}+O(n^{0.5-\varepsilon})$ time.

	\subsection{Related Works}
	\paragraph*{$\LIS$ in popular settings and models.}
	To the best of our knowledge, there is not much prior work on the reporting version of $\RLIS$. The results of Tiskin \cite{Tiskin08a, Tiskin08b, Tiskin10} on the length version of $\ORLIS$ were later adapted to the reporting version in \cite{GGK24}, subsequent to our work.
	Therefore, for the rest of this subsection, we list works on the $\LIS$ problem in some popular settings/models. 
	
	In the Dynamic $\LIS$ problem we need to maintain the length of $\LIS$
	of an array under insertion and deletion. The $\LIS$ problem in the dynamic setting was initiated by Mitzenmacher and Seddighin \cite{mitzenmacher2020dynamic}.  In \cite{gj21}, an
	algorithm running in $O(\varepsilon^{-5}(\log n)^{11})$ time  and providing $(1 + \varepsilon)$-approximation was presented (this approximation algorithm can be adapted to approximate the $\RLIS$ problem in the dynamic setting),
	and in \cite{ks21}  a randomized exact algorithm with the update
	and query time $\tilde{O}(n^{2/3})$  was provided.  
	Finally, in \cite{gawrychowski2021conditional}, the authors provide conditional polynomial lower bounds for exactly solving $\RLIS$ in the dynamic setting.

	In the streaming
	model, computing the $\LIS$ requires $\Omega(n)$ bits of space \cite{GopalanJKK07}, so it is natural to resort to
	approximation algorithms.  In \cite{GopalanJKK07}
	is a deterministic $(1 +\varepsilon)$-approximation in $O(\sqrt{n/\varepsilon})$ space for $\LIS$ problem, and this was shown to be optimal \cite{ergun2008distance,gal2010lower}.
	We remark here that the $\LIS$ problem has also been implicitly studied in the streaming algorithms literature as estimating the sortedness of an array \cite{ajtai2002approximate,liben2006finding,GopalanJKK07,sun2007communication}.
	
	In the setting of sublinear time algorithms, the authors of \cite{saks2017estimating} showed how to approximate the length of the $\LIS$ to within an additive
	error of $\varepsilon\cdot n$, for an arbitrary $\varepsilon\in (0, 1)$, with $(1/\varepsilon)^{1/\varepsilon}\cdot (\log n)^{O(1)}$ queries. Denoting the length of $\LIS$
	by $\ell$, in \cite{rubinstein2019approximation} the authors designed a non-adaptive $O(\lambda^{-3})$-multiplicative factor approximation algorithm, where $\lambda=\ell/n$, with
	$O(\sqrt{n}/\lambda^7)$ queries (and also obtained different tradeoff between the dependency on $\lambda$ and $n$). 
	In \cite{newman2021}, the authors proved that adaptivity is essential in obtaining polylogarithmic query
	complexity for the $\LIS$ problem. Recently, for any $\lambda=o(1)$, in \cite{andoni2022estimating}   a (randomized) non-adaptive $1/\lambda^{o(1)}$-multiplicative factor approximation algorithm
	was provided with 
	$n^{o(1)}/\lambda$ running time.
	
	In the read-only random access model, 
	the authors of \cite{kiyomi2018space} showed how to find the length of $\LIS$ in
	$O((n^2/s) \log n)$ time and only $O(s)$ space, for any parameter $\sqrt{n}\le s\le n$.

	\paragraph*{Range-aggregate queries.}
	{\em Range-aggregate queries} have traditionally been well-studied 
	in the database community and the computational geometry community. See the survey 
	papers~\cite{ae98,rt19,gjrs18}. In a typical 
	range-aggregate query, the input is a collection of points in a $d$-dimensional 
	space, and the user specifies a range query (such as an axis-aligned rectangle, or a 
	disk, or a halfspace), and the goal is to compute an informative summary 
	of the subset of points which lie inside the range query, such as 
	the top-$k$ points~\cite{b16,rt15,rt16,abz11,bfgl09}, a random subset of $k$ points~\cite{t22,hqt14}, 
	reporting or counting {\em colors} or groups~\cite{cn20,chn20,r21}, statistics such as 
	mode, min~\cite{bdr10}, median~\cite{bgjs11}, or sum, the closest pair of points~\cite{xue19,xlrj20,xlrj22}, and the skyline points~\cite{bl14,bt11,rj12}. 
	
	In an {\em orthogonal range-max} problem, the input 
	is a set $\mP$ of $n$ points in $d$-dimensional 
	space. Each point in $\mP$ has a weight associated 
	with it. Preprocess $\mP$ into a data structure, 
	so that given an axis-parallel box $\query$, the
	goal is to report the point in $\mP \cap \query$
	with the largest weight. In this work, we will use {\em vanilla range
		trees} from the textbook~\cite{bcko08} to answer
	range-max queries (since the goal of this work is 
	not to shave polylogarithmic factors).
	The preprocessing time to build a vanilla
	range tree is $O(n\log^{d-1}n)$ and the query time
	is $O(\log^dn)$.

	\subsection{Our first technique: Handling small $\LIS$}
	\label{sec:small-lis}
	
	We design a general technique to obtain 
	all our upper bounds. We note that throughout the paper, there is ample scope to shave logarithmic factors in the runtime of our various algorithms and subroutines. However, that is not the focus of this paper.
	
	For the sake of 
	simplicity, we provide an overview of the 
	general technique for the simplest case of 
	$\ORLIS$ which is in 1D and does not involve colors.
	Our strategy to solve the $\ORLIS$ 
	problem is the following. Consider an integer 
	parameter $\tau \in [1,n]$. For the $\ORLIS$ problem,
	$\tau$ is set to
	$\sqrt{n}$. We will design 
	two different techniques. 
	The first technique is deterministic and reports
	the correct answer if $\LIS(\mS \cap \query) 
	\leq \tau$. If $\LIS(\mS \cap \query) > \tau$, then 
	correctness is not guaranteed.
	The second technique
	is randomized and reports the correct answer
	with high probability if $\LIS(\mS \cap \query) 
	\geq \tau$. 
	As such, for each $\query \in \Q$, w.h.p., 
	the correct 
	result is obtained by one of the two algorithms.

	In this subsection, we will give an overview 
	of the first technique for $\ORLIS$.
	In our discussion we will use 
	the terms element and point interchangeably.
	As discussed before, the $i^{\text{th}}$
	element in  sequence $\mS$ is equivalent
	to the 2D point $(i,a_i)$ in $\mP$. 
	Consider the special case where there is a 
	vertical line with $x$-coordinate  $x^*$ such that each query range in $\Q$ 
	contains $x^*$ (we will show later in Section~\ref{sec:first-tec}
	how to remove the assumption).

	\paragraph{Idea-1: Lowest peaks and highest bases.} 
	Please refer to Figure~\ref{fig:peaks-bases}(i)
	where
	eighteen points are shown. Chain $C_1$ 
	and chain $C_2$ are  increasing 
	sequences to the left 
	of $x^*$ and both have length four. 
	Our first observation is that if we 
	are allowed to store either $C_1$ or $C_2$, 
	then it would be wiser to store $C_2$.
	The reason is that $C_2$ can ``stitch''
	itself with chain $C_3$ and $C_4$, 
	to form an increasing subsequence of length 
	six and eight, respectively. On the other hand,
	$C_1$ cannot stitch itself with  $C_3$ or $C_4$ 
	to form a larger increasing subsequence. 
	Analogously, between chains $C_4$ and $C_5$, 
	we will prefer $C_4$ since it can stitch
	with $C_2$, whereas $C_5$ cannot stitch with any
	chain to the left of $x^*$.
	In fact, the $\LIS$ of the eighteen points
	is realised by the chain $C_2 \cup C_4$
	with length eight. This motivates
	the definition of {\em lowest peak} 
	and {\em highest base}.
	
	\begin{figure}[h]
		\begin{center}
			\includegraphics[scale=1]{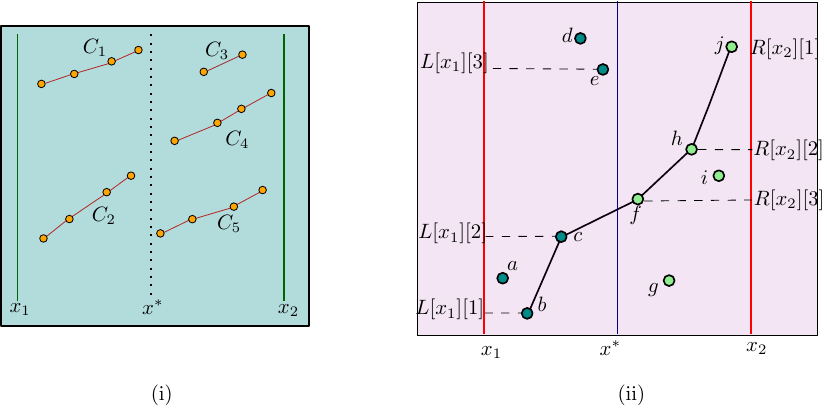}
		\end{center}
		\caption{(i) An example with eighteen points and five chains is shown. (ii) An example with ten points. The
			query is the range $[x_1,x_2]$. In the 
			range $[x_1,x^*]$, the sequence with the 
			lowest peak of length one is $(b)$, of length
			two is $(b,c)$ or $(a,c)$, of length three
			is $(b,c,e)$ or $(a,c,e)$. In the range 
			$[x^*,x_2]$, the sequence with the highest 
			base of length one is $(j)$, of length two
			is $(h,j)$, of length three $(f,h,j)$ or 
			$(f, i,j)$. The $\LIS$ of points in 
			the range $[x_1,x_2]$ is five and 
			the corresponding  compatible pair is
			$(L[x_1][2], R[x_2][3])$.
		}
		\label{fig:peaks-bases}
	\end{figure}
	
	For an increasing subsequence $S \in \mS$, we define {\em peak} 
	(resp., {\em base}) to be the last 
	(resp., first) element in $S$. 
	The algorithm will store two $2$-dimensional arrays:
	\begin{description}
		\item[Array $L$:] For all integers
		$1 \leq x_1 \leq x^*$ and for all 
		$1\leq \alpha \leq \tau$, among 
		all increasing subsequences of length 
		$\alpha$ in the range 
		$[x_1, x^*]$, let $S$ be the subsequence with the {\em lowest} peak.
		Then $L[x_1][\alpha]$ stores the value of the last element in $S$.
		\item[Array $R$:] For all integers $x^* < x_2\leq n$ and for all $1\leq \beta \leq \tau$, among all 
		increasing sequences of length $\beta$ in the range $(x^*,x_2]$,
		let $S$  be the subsequence  with the {\em highest} base. 
		Then $R[x_2][\beta]$ stores the value of the 
		first element in $S$.
	\end{description}
	
	For example, in Figure~\ref{fig:peaks-bases}(i), for the range $[x_1, x^*]$
	and $\alpha=4$, the chain $C_2$
	is the subsequence with the lowest peak; 
	and for $(x^*,x_2]$ and $\alpha=4$, chain $C_4$
	is the subsequence with the highest base.
	Efficient computation of the $O(n\tau)$
	entries in arrays $L$ and $R$ will be discussed later.
	
	A pair $(L[x_1][\alpha], R[x_2][\beta])$ is defined
	to be {\em compatible} if $L[x_1][\alpha] < R[x_2][\beta]$, i.e., 
	it is possible to ``stitch'' the 
	sequences corresponding to $L[x_1][\alpha]$  and 
	$R[x_2][\beta]$ to obtain 
	an increasing sequence of length $\alpha +\beta$.
	For a query range $\query=[x_1,x_2]$, 
	our goal is to find the compatible pair 
	$(L[x_1][\alpha], R[x_2][\beta])$ which 
	maximizes the value of $\alpha + \beta$, i.e., 
	\[\LIS(\mS \cap [x_1,x_2])=\max_{\alpha,\beta}\{\alpha + \beta   \mid 
	L[x_1][\alpha] < R[x_2][\beta]\}.\]
	
	\paragraph{Idea-2: Monotone property of 
		peaks and bases.}
	For a given query $\query=[x_1,x_2]$, 
	a naive approach is to inspect all  
	pairs of the form 
	$(L[x_1][\alpha], R[x_2][\beta])$, 
	for all $1\leq \alpha, \beta \leq \tau$,
	and then pick a compatible pair which maximizes
	the value of $\alpha + \beta$. 
	The number of pairs inspected is $\Theta(\tau^2)$
	which we cannot afford. 
	However, the following {\em monotone property} will help 
	in reducing the number of pairs inspected to 
	$O(\tau\log\tau)$:
	\begin{itemize}
		\item For a fixed value of $x_1$, the value of $L[x_1][\alpha]$ increases 
		as the value of $\alpha$ increases, i.e., 
		for any $1\leq \alpha \leq \alpha' \leq \tau$, we have
		$L[x_1][\alpha] \leq L[x_1][\alpha']$.
		\item Analogously, for a fixed value of $x_2$, the value of $R[x_2][\beta]$ 
		decreases as the value of $\beta$ increases, i.e., 
		for any $1\leq \beta \leq \beta' \leq \tau$, we have 
		$R[x_2][\beta] \geq R[x_2][\beta']$.
	\end{itemize}
	See Figure~\ref{fig:peaks-bases}(ii) for 
	an example  
	where $L[x_1][\alpha]$ values are 
	increasing with increase in $\alpha$
	and $R[x_2][\beta]$ values are 
	decreasing with increase in $\beta$.
	For each $1\leq i \leq \tau$, 
	the largest $\beta_i$ for which 
	$(L[x_1][i], R[x_2][\beta_i])$ is compatible
	can be found in $O(\log \tau)$ time by doing 
	a binary search on 
	the monotone sequence 
	$(R[x_2][1], R[x_2][2],\ldots, R[x_2][\tau])$. 
	Finally,
	report $\max_{i}(i+\beta_i)$ as the length of the 
	$\LIS$ for 
	$\mS \cap \query$. Overall, the time taken to answer
	the $m$ query ranges will be
	$O(m\tau\log \tau)$.

	\paragraph{Idea-3: Pivot points and dynamic programming.}
	Now the key task is to design a preprocessing 
	algorithm which can quickly compute each entry in 
	the two-dimensional arrays $L$ and $R$.
	We will look at $R$ and an analogous discussion will
	hold for $L$. Assume that we have computed the 
	value of $R[x_2-1][\beta]$ and would like to next compute
	$R[x_2][\beta]$. Let $p_{x_2}=(x_2, a_{x_2})$ be the 
	point with $x$-coordinate $x_2$
	which is encountered
	when we move the vertical line from $x_2-1$ to $x_2$. 
	
	Formally, we claim that the value of 
	$R[x_2][\beta]$ is higher than $R[x_2-1][\beta]$ 
	if and only if there is an increasing subsequence 
	with the following three {\em properties}: (i) the subsequence
	length is $\beta$, (ii) the base of the subsequence
	is  higher than  $R[x_2-1][\beta]$, and 
	(iii) the last point in the subsequence is
	$p_{x_2}$. Please refer to Figure~\ref{fig:add-dp}
	for an example.

	\begin{figure}[h]
		\begin{center}
			\includegraphics[scale=1]{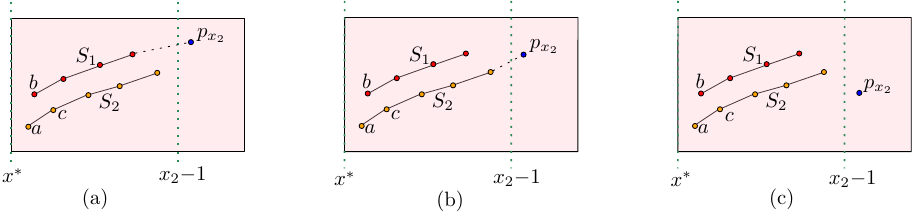}
		\end{center}
		\caption{Two sequences $S_1$ and $S_2$ lie 
			to the left of the vertical line $x_2-1$. The value of $R[x_2-1][5]$ is realized by 
			the $y$-coordinate of point $a$. In (a) and (b), we 
			have $R[x_2][5] > R[x_2-1][5]$, and in (c), we have 
			$R[x_2][5]=R[x_2-1][5]$. In case~(a), point $p_{x_2}$
			stitches with $S_1$ to obtain a subsequence of 
			length five with with higher base than $a$. 
			In case~(b), by stitching 
			point $p_{x_2}$  with $S_2$ and removing point $a$, 
			we obtain a subsequence of length five and it has 
			a higher base than $a$.}
		\label{fig:add-dp}
	\end{figure}

	We define $p_{x_2}$ to be a {\em pivot} point. 
	Now to efficiently compute $R[x_2][\beta]$, 
	we will need the support of another 
	two-dimensional array $B$ defined as follows: among 
	all increasing subsequences of length $\beta$ 
	in the range $(x^*, x_2]$ containing 
	pivot point $p_{x_2}$ as the last element, 
	let $S$ be the subsequence
	with the highest base. 
	Then $B[x_2][\beta]$ is defined to be the value of the first element 
	in $S$. 
	In Figure~\ref{fig:add-dp}(a) and 
	Figure~\ref{fig:add-dp}(b), $B[x_2][5]$ 
	is realized by the $y$-coordinate of $b$ and $c$, 
	respectively.
	As a result, we can succinctly encode
	the three properties stated above as follows:
	\[ R[x_2][\beta] \leftarrow \max\{R[x_2{-}1][\beta], B[x_2][\beta]\}.\] 
	In other words, an increasing sequence of length $\beta$ in the 
	interval $(x^*,x_2]$ with the highest base will either contain the pivot element $p_{x_2}$ or not.
	The first term (i.e., $R[x_2{-}1][\beta]$) captures the case where $p_{x_2}$ is not included, whereas 
	the second term (i.e., $B[x_2][\beta]$) captures the  case where $p_{x_2}$ is 
	included. As such, the task of efficiently computing
	$R$ has been {\em reduced} to the task of efficiently 
	computing $B$.

	\paragraph{Idea-4: 2D range-max data structure.}
	The final step is to compute each entry in 
	$B$ in polylogarithmic time. 
	The entries will be computed in increasing values
	of $\beta$. 
	Assume that the entries 
	corresponding to sequences of length at most $\beta {-}1$ have been computed. Then the entries 
	$B[x_2][\beta]$'s can be reduced to entries of
	$B[\cdot][\beta-1]$'s as follows:
	\begin{align*}
		B[x_2][\beta]=
		\begin{cases}
			a_{x_2}, \quad \text{ if $\beta=1$;} \\
			-\infty, \ \ \ \text{if $a_i > a_{x_2}$ for all $x^*< i < x_2$;} \\
			\max\{B[i][\beta{-}1] \ \vert \ x^*< i < x_2 \text{ and } a_i < a_{x_2}\},
			\quad   \text{otherwise}.
		\end{cases}
	\end{align*}
	We will efficiently compute $B[x_2][\beta]$'s by 
	constructing 
	a data structure which can answer
	{\em 2D range-max queries}. 
	In a 2D range-max query, we are given $n$ weighted points 
	$\mP$ in 2D. Each point is associated with a weight.
	For each point $p_i=(i,a_i)\in \mP$, 
	given a query region of the form 
	$\query'=(-\infty, i) \times (-\infty, a_i)$, 
	the goal is to report the point in $ \mP \cap \query'$
	with the maximum weight.
	The 2D range-max problem can be solved in 
	$O(n\log n)$ time (via reduction to the so-called
	{\em 2D orthogonal 
		point location} problem~\cite{st86}).

	Now with each point $p_i\in \mP$, we 
	associate the weight $B[i][\beta-1]$
	and build a 2D range-max data structure.
	For each point $p_{x_2} =(x_2,a_{x_2}) \in \mP$, the point
	in $\mP \cap 
	(-\infty, x_2) \times (-\infty, a_{x_2})$ with 
	the maximum weight is reported. If point $p_i$ is reported, 
	then set $B[x_2][\beta]=B[i][\beta-1]$.
	The correctness follows from the third case in the above 
	dynamic programming.
	Since $\beta \leq \tau$, the time taken 
	to construct $B$ is
	$O(\tau)\times O(n\log n)=O(n\tau\log n)$.
	
	Therefore, the overall time taken by this 
	technique to answer
	$\ORLIS$ will be $\TO(m\tau + n\tau) + O(k)=\TO(m\sqrt{n} + n\sqrt{n})+O(k)$
	by setting $\tau \leftarrow \sqrt{n}$.
	
	\paragraph{Idea-5: Generalizing $L$ and $R$ to handle 
		$\TRLIS$.} We briefly describe the 
	challenge arising while handling $\TRLIS$ 
	where the queries are axis-parallel rectangles.
	The definition of highest bases and lowest peak
	fail to be directly useful. For example,
	see Figure~\ref{fig:2d-RL}, where $\LIS(\mP \cap \query)$
	is equal to five, and is realized by stitching $S_2$ and $S_3$.
	However, $L[x_1][3]$ will correspond to the sequence $S_1$
	and $R[x_2][2]$ will correspond to the sequence $S_4$, both 
	of which lie completely outside $\query$.
	
	\begin{figure}[h]
		\begin{center}
			\includegraphics[scale=1]{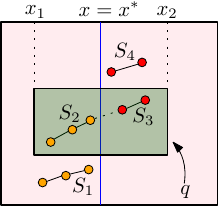}
		\end{center}
		\caption{Four increasing subsequences lying 
			in the range $[x_1,x_2] \times (-\infty,+\infty)$. }
		\label{fig:2d-RL}
	\end{figure}
	
	The key observation in the 2D setting is that for an increasing 
	subsequence $S$ and a query rectangle 
	$\query=[x_1,x_2] \times [y_1,y_2]$, if the first and the last point of $S$ lie inside $\query$, 
	then all the points of $S$ lie inside $\query$.
	Accordingly, we generalize the definitions of 
	$L$ and $R$. Specifically, in the 2D setting,
	$R$ takes two arguments, the query $\query$ and an 
	integer $\beta$, and is defined as follows: 
	\[R(q,\beta) \leftarrow \max_{p_i=(i,a_i)}
	\{ B[i][\beta] \mid p_i\in q, i> x^* \text{ and } B[i][\beta] > y_1 \}.\]
	For example, in Figure~\ref{fig:2d-RL}, 
	the new definition of $R(q,2)$ ensures that the sequence
	$S_3$ gets captured instead of $S_4$. 
	We construct $L$ analogously. Note that 
	in the 1D setting, $L$ and $R$ were {\em explicitly}
	computed. However, in the 2D setting, the
	number of combinatorially different 
	axis-parallel rectangles
	is significantly more than our time 
	budget. Therefore, $L$ and $R$
	{\em cannot} be precomputed and stored. Instead, 
	for all $1\leq \alpha \leq \tau$ (resp.,  
	$1\leq \beta \leq \tau$),
	we will compute $L(\query, \alpha)$ 
	(resp., $R(\query, \beta)$) during the 
	query algorithm in $O(\tau\cdot\text{polylog } n)$ time.

	\subsection{Our second technique: Handling large $\LIS$}
	\begin{sloppypar}
		Now we give an overview of our second technique for the $\ORLIS$ problem (which we show in Theorem~\ref{thm:large-LIS-2d} extends to the $\TRLIS$ problem as well).
		If \mbox{$\LIS(\mS \cap \query) \geq~\tau$}, then this 
		technique will report the answer correctly with 
		high probability.
	\end{sloppypar}
	
	\paragraph{Idea-1: Stitching elements.} 
	One challenge with the $\LIS$ problem is
	that it is {\em not decomposable}. Consider a 
	query range $\query=[x_1,x_2] \times (-\infty,+\infty)$ 
	and an 
	$x^*$ such that $x_1 < x^* < x_2$.
	Let $\query_{\ell}=[x_1,x^*] \times (-\infty,+\infty)$
	and $\query_{r}=(x^*,x_2] \times (-\infty, +\infty)$ 
	such that $q=q_{\ell} \cup q_{r}$.
	Then the $\LIS$ of 
	$\mP \cap \query$ need not be the concatenation of the $\LIS$ of $\mP \cap \query_{\ell}$ and the 
	$\LIS$ of $\mP \cap \query_{r}$,  since the rightmost point
	in the $\LIS$ of $\mP \cap \query_{\ell}$ 
	might have a  larger value 
	than the leftmost point in the $\LIS$ of 
	$\mS \cap \query_{r}$.

	However, suppose
	an oracle reveals that point $p_i\in \mP$ lies in the 
	$\LIS$ of  range $\query=[x_1,x_2] \times 
	(-\infty,+\infty)$. 
	Then we claim that the problem becomes decomposable.
	We will define some notation before proceeding. 
	For a point $p_i=(i,a_i) \in \mP$ corresponding to 
	an element $a_i \in \mS$, define the {\em north-east} region of $p_i$ as $NE(p_i) = \{ (x,y) \mid x \geq i \text{ and } y \geq a_i \}$.
	Analogously, define the {\em north-west}, the 
	{\em south-west} and the {\em south-east} 
	region of $p_i$ which are denoted by 
	$NW(p_i), SW(p_i) \text{ and } SE(p_i)$, 
	respectively.
	Then it is easy to observe that,
	\begin{equation}\label{eq:decomposable}
		\LIS(\mP \cap \query)=\LIS(\mP \cap \query \cap 
		NE(p_i)) 
		+ \LIS(\mP \cap \query \cap SW(p_i)) -1.
	\end{equation}
	See Figure~\ref{fig:stitch}
	for an example. In other words, knowing that $p_i$ belongs to the $\LIS$ decomposes the 
	original $\LIS$ problem into two  $\LIS$ sub-problems which can be 
	computed independently. 
	In such a case, we refer to point $p_i$ as a {\em stitching element},
	which is formally defined below.
	\begin{definition}
		{\bf(Stitching element)} For a range $\query$, 
		fix any $\LIS$ of $\mP \cap \query$ and  call it $S$. 
		Then each element in $S$ is defined to be a stitching element w.r.t. $\query$.
	\end{definition}
	
	The goal of our algorithm is to:
	\begin{center}
		{\em Construct a small-sized set $\R \subseteq \mP$ 
			such that, for any $\query\in \Q$,  at least one 
			stitching element w.r.t. $\query$ is contained in 
			$\R \cap \query$.}
	\end{center}
	
	For each $p_i \in \R$, in the preprocessing 
	phase, the two terms on the right hand side
	of equation~\ref{eq:decomposable} can be computed
	in $O(n\log n)$ time for all possible queries.
	Therefore, the preprocessing time will be
	$O(|\R|n\log n)$.

	\begin{figure*}
		\centering
		\includegraphics{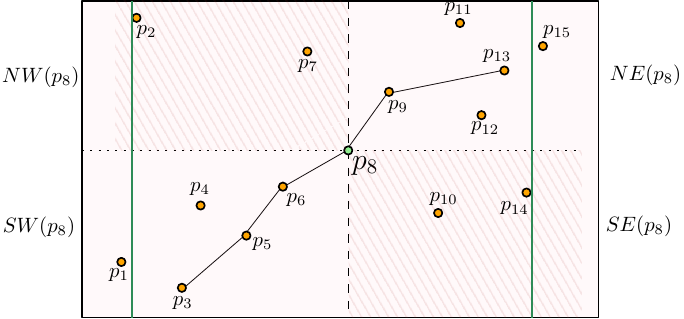}
		\caption{A collection of fifteen elements. 
			Given a query range $\query=[2,14] \times 
			(-\infty,+\infty)$, suppose an oracle reveals
			that $p_8$ lies in the $\LIS$ of $\mP \cap \query$.
			Then observe that we can discard the points 
			in $SE(p_8)$ and $NW(p_8)$ which are shown 
			as shaded regions.  The $\LIS$ of 
			$\mP \cap \query \cap NE(p_8)$ will be 
			$(p_8,p_9, p_{13})$ and and the $\LIS$
			of $\mP \cap \query \cap SW(p_8)$ will be 
			$(p_3,p_5,p_6,p_8)$. Therefore, $\LIS$ of 
			$\mP \cap \query$ will be 
			$(p_3, p_5, p_6, p_8, p_9, p_{13})$.}
		\label{fig:stitch}
	\end{figure*}

	\paragraph{Idea-2: Random sampling.} By using the fact 
	that $\LIS(\mP \cap q) \geq \tau$, it is possible 
	to construct a set $\R$ of size roughly 
	$\frac{n\log n}{\tau} \approx\sqrt{n}\log n$ via random sampling.
	For each $1\leq i\leq n$, sample  $p_i \in \mP$ independently 
	with probability $\frac{c\log n}{\tau}$, where $c$ is a sufficiently 
	large constant. Let $\R \subseteq \mP$ be the set of sampled points. 
	Then we can establish the following 
	connection between $\R$ and the stitching elements.
	
	\begin{restatable}{lemma}{lemlogn}\label{lem:logn-stitching}
		For all ranges $\query$ such that 
		$\LIS(\mP \cap q) \geq \tau$, 
		if $\sq$ is one of the $\LIS$ of $\mP \cap \query$, then 
		with high probability
		$|\R \cap \sq|=\Omega(\log n)$.
	\end{restatable}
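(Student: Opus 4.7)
The plan is to apply a standard multiplicative Chernoff bound. Fix any range $\query$ with $\LIS(\mP \cap \query) \geq \tau$ and fix any one LIS $\sq$ of $\mP \cap \query$; by hypothesis $|\sq| \geq \tau$. Since each element of $\mP$ is placed in $\R$ independently with probability $p = c\log n / \tau$, the quantity $|\R \cap \sq|$ is a sum of $|\sq|$ independent Bernoulli$(p)$ variables whose mean $\mu := p\cdot|\sq|$ satisfies $\mu \geq c\log n$.

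Applying the multiplicative Chernoff bound with deviation $1/2$,
$$\Pr\!\left[\,|\R \cap \sq| \leq \tfrac{1}{2}\mu\,\right] \;\leq\; \exp(-\mu/8) \;\leq\; \exp(-(c/8)\log n) \;=\; n^{-c/8}.$$
Taking the constant $c$ in the sampling probability to be sufficiently large (larger than $8d$ for any desired error exponent $d$) yields $|\R \cap \sq| \geq (c/2)\log n = \Omega(\log n)$ with probability at least $1 - n^{-d}$, which is the claimed high-probability conclusion for the fixed pair $(\query,\sq)$.

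Since the set $\R$ is drawn in preprocessing, oblivious to the query collection $\Q$, I would then uplift this single-pair statement to a uniform guarantee by a union bound. Although the family of possible ranges is infinite, only the combinatorial type of $\mP \cap \query$ matters: for 1D intervals there are $O(n^2)$ distinct restrictions of $\mP$, and for axis-aligned rectangles in 2D there are $O(n^4)$. For each such combinatorial class, pick a canonical representative LIS deterministically (say, the lexicographically smallest one) \emph{before} drawing $\R$, apply the Chernoff bound above, and union-bound. Choosing $c$ to dominate the combinatorial exponent (e.g., $c \geq 40$ for the 2D case) makes the overall failure probability polynomially small in $n$.

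The argument is a routine concentration bound, so there is no real obstacle; the only subtlety to watch is to make sure the canonical choice of $\sq$ per combinatorial class is made deterministically from $\mP$ and the combinatorial structure alone, so that the indicator variables $\mathds{1}[p_i \in \R]$ remain independent of the event we are measuring. This holds automatically because $\mP$ is fixed before sampling, and canonicalization is purely combinatorial.
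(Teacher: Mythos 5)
Your proof is correct and follows essentially the same route as the paper's: a multiplicative Chernoff bound with deviation $1/2$ applied to the $\mathrm{Binomial}(|\sq|,\, c\log n/\tau)$ variable $|\R\cap \sq|$, followed by a union bound with $c$ chosen large enough. The only (harmless) difference is the scope of the union bound --- you union over all $O(n^{4})$ combinatorial types of ranges with a canonical LIS per type, whereas the paper unions only over the query ranges in $\Q$; both suffice for how the lemma is used.
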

	
	This ensures that the preprocessing time 
	will be $O(|\R|n\log n)=O(n^{3/2}\log^2 n)$.
	To answer any range $\query$, first scan $\R$ 
	to identify the points
	which lie inside $\query$. 
	For each element $p_i \in \R \cap \query$, compute 
	the right hand side of equation~\ref{eq:decomposable} 
	in $O(\log n)$ time. 
	Finally, report the largest value computed.
	Therefore, the query time is bounded by 
	$O(|\Q|\cdot |\R|\cdot \log n)=O(m\sqrt{n}\log^2n)$.

	\subsection{Our third technique: 
		Handling small cardinality colors}
	
	\paragraph{A naive approach.} The $\CRLIS$ problem is  more 
	challenging than the $\TRLIS$ problem. Firstly, 
	the result of Theorem~\ref{thm:TRLIS} does not 
	really help in answering $\CRLIS$ since it 
	completely ignores the information about the 
	colors. Secondly, there might be a temptation
	to solve the problem ``independently'' 
	for each color class. For example, consider 
	a setting where each color class has roughly 
	equal number of points, i.e., 
	$n/|\C|$ points. Consider a color $c$
	and build an instance of Theorem~\ref{thm:TRLIS} 
	based on the points of color $c$. Then, for each 
	query $\query \in \Q$, we have the value of 
	$\LIS(\mP_c \cap q)$. Repeat this for each color
	in $\C$. Finally, for each query $\query \in \Q$, 
	report the color $c$ for which $\LIS(\mP_c \cap q)$
	is maximized.
	
	Now lets (informally) analyze this algorithm.
	The running time bound of Theorem~\ref{thm:TRLIS}
	has three terms. For now, lets ignore the second term 
	and the third term, which represent the time
	taken to build the data structure and the time 
	taken to report the output of the $m$ queries.
	The first term is the time taken to answer 
	$m$ queries (which is roughly $\sqrt{n}$ time per query).
	Adding the first term for each color class, we get
	\[ \sum_{c} \TO(m|\mP_c|^{1/2}) \leq \TO(|\C| m(n/|\C|)^{1/2})
	=\TO(m\sqrt{n|\C|}),\]
	which is $\TO(mn)$ when $|\C|=\Theta(n)$
	and $\TO(mn^{\frac{1+\varepsilon}{2}})\gg m\sqrt{n}$ when 
	$|\C|=n^{\varepsilon}$, for any $0 < \varepsilon \leq 1$.
	Therefore, in the worst-case this approach does not help
	achieve our target query time bound.

	\paragraph{A technique to handle small cardinality 
		colors.}
	We will design a third technique which will be
	helpful to answer $\CORLIS$ and $\CRLIS$.
	The key idea is to handle all the colors in a 
	``combined'' manner. The technique will work 
	well when all the colors have small cardinality.
	Given a parameter $\Delta$, assume that for each color $c$, 
	the value of $|\mP|_c \leq \Delta$. 
	The key observations made by the algorithm are 
	the following:
	\begin{itemize}
		\item {\em Bounding the number of output 
			subsequences.} For a query $\query$, let
		the output $\LIS$ be $S$ of color $c$. Let 
		$p_i \in \mP_c$ (resp., $p_j \in \mP_c$) be the 
		first (resp., last) point on $S$. Call this a pair $(i,j)$.
		As such, whenever color 
		$c$ is the output, the number of distinct pairs
		will be $O(|\mP_c|^2)$. Adding up 
		over all the colors, the total 
		number of distinct pairs will be:
		\[\sum_{c}O(|\mP_c|^2) \leq O\left(\Delta\sum_{c}|\mP_c|\right)=O(n\Delta).\]
		If $\Delta \ll n$, then this is a significant
		improvement over the naive bound of $O(n^2)$ 
		distinct pairs (or distinct output subsequences).
		\item {\em Reduction to an uncolored problem.}
		For each possible output subsequence $S$ of 
		color $c$ with $p_i \in \mP_c$ (resp., $p_j \in \mP_c$)
		as the first (resp., last) point point on $S$, 
		we construct an axis-aligned rectangle 
		$R_c(i,j)$ with $p_i$ (resp., $p_j$) as the
		bottom-left (resp., top-right) corner. 
		A weight $w(R_c(i,j))$ is associated with rectangle
		$R_c(i,j)$ which is equal to $\LIS(\mP_c \cap R_c(i,j))$.
		See Figure~\ref{fig:bounding-box}.
		Let $\R$ be the collection of $O(n\Delta)$ such weighted
		rectangles. As a result, the problem of 
		$\CRLIS$ is now reduced to the 
		{\em rectangle range-max problem}, where given
		an axis-parallel rectangle $\query$, among all the rectangles
		in $\R$ which lie completely inside $\query$, 
		the goal is to report 
		the rectangle  with the largest weight. 
		
		It is crucial to note that the colored 
		problem has been reduced to a problem on
		{\em uncolored} rectangles for which an 
		efficient data structure exists: the 
		rectangle range-max data structure 
		can be constructed in $\TO(|\R|)=\TO(n\Delta)$ time
		and the $m$ range queries can be answered in 
		$\TO(m) + O(k)$ time.
	\end{itemize}
	
	\begin{figure}
		\centering
		\includegraphics{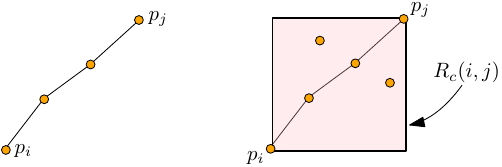}
		\caption{On the left is an increasing subsequence 
			of length four with all four points
			being of color $c$. On the right is the corresponding
			rectangle $R_c(i,j)$. The $\LIS$ of color $c$ inside
			$R_c(i,j)$ is four (the other two points of color 
			$c$ do not participate in the $\LIS$). Therefore,
			the weight associated with $R_c(i,j)$ is four.}
		\label{fig:bounding-box}
	\end{figure}

	\subsection{Putting all the pieces together}
	As an illustration we will consider the 
	$\CRLIS$ problem and put together all the 
	three techniques discussed in the previous subsections in a specific manner.
	
	\paragraph{Light and heavy colors.}
	Define a parameter $\Delta$ which will be set later.
	A color $c$ is classified as {\em light} if 
	$|\mP_c|\leq \Delta$, otherwise, it is classified
	as {\em heavy}. We will design different algorithms
	to handle light colors and heavy colors.
	The advantage with a light color, say $c$, is that
	we can precompute the $\LIS$ for $O(|\mP_c|^2)$ 
	2D axis-parallel 
	ranges and still be within the time budget. 
	On the other hand, the advantage with heavy colors
	is that there can be only $O(n/\Delta)$ heavy colors.
	
	\paragraph{Handling light colors.}
	Let $\mP_{\ell} \subseteq \mP$ 
	be the set of points which belong to a light color.
	We will use the third technique to answer 
	$\CRLIS$ on $\mP_{\ell}$
	and queries $\Q$.
	From Section~\ref{sec:third-tec}, it follows that 
	the running time is $\TO(m+n\Delta) +O(k)$.

	\paragraph{Handling heavy colors and small $\LIS$ 
		queries.}
	Let $\mP_{h} \subseteq \mP$ 
	be the set of points which belong to a heavy color.
	In Section~\ref{sec:first-tec}, we will prove
	that for an arbitrary value of $\tau$, the running time
	of the first technique will be $\TO(m\tau + n\tau)$.
	The first technique as described above only 
	handles uncolored points. In Section~\ref{sec:first-tec},
	we will ``generalize'' the first technique to
	answer $\CRLIS$ problem as well. 
	The running time of this generalized first technique 
	on colored pointset $\mP_{h}$ and queries
	$\Q$ will be 
	$\TO \left(\frac{m\tau n}{\Delta} + \frac{n^2\tau}{\Delta}\right) +O(k)$, where the number of heavy colors is $O(n/\Delta)$.
	
	\paragraph{Handling heavy colors and large $\LIS$ 
		queries.}
	In Section~\ref{sec:second-tec}, we will prove that
	for an arbitrary $\tau$, the running time of the 
	second technique will be $\TO \left( \frac{mn}{\tau}+\frac{n^2}{\tau}\right) + O(k)$ for $\TRLIS$ problem.
	In fact, we will generalize this algorithm 
	to the large $\LIS$ case of $\CRLIS$ with the 
	same running time (ignoring polylogarithmic factors).
	The generalized algorithm will answer 
	$\CRLIS$ on $\mP_h$ and queries 
	$\Q$.

	Combining all the three subroutines, the total running
	time will be:
	\[ \TO\underbrace{\left(m+\frac{mn}{\tau} + \frac{m\tau n}{\Delta}\right)}_{\text{query time}} 
	+ \underbrace{\TO\left( \frac{n^2}{\tau} + n\Delta + \frac{n^2\tau}{\Delta}  \right)}_{\text{preprocessing time}} + O\left(k \right)\]
	We set the parameters $\tau \leftarrow n^{1/3}$ 
	and $\Delta \leftarrow n^{2/3}$ in the above expression 
	to obtain a running time 
	of $\TO(mn^{2/3}+ n^{5/3}) +O(k)$.

	
	\paragraph{Organization of the paper.} 
	The rest of the paper is organized as follows.
	In Section~\ref{sec:first-tec}, we will discuss
	our first technique for handling queries in $\Q$
	which have small $\LIS$. In Section~\ref{sec:second-tec}, we will discuss
	our second technique for handling queries in $\Q$
	which have large $\LIS$. In Section~\ref{sec:third-tec}, we will discuss
	our final technique for handling colors in $\mP$ 
	which are light (cardinality is small).
	In Section~\ref{sec:final-results}, we will put 
	all the techniques together in different ways 
	to  derive all the upper bounds. Next, in Section~\ref{sec:lower} we present our conditional lower bound. Finally, in Section~\ref{sec:open}, we present some open problems for future research.
	
	\section{First technique: Handling small $\LIS$}\label{sec:first-tec}
	In this section, we describe a technique 
	which efficiently handles queries with 
	small $\LIS$. 
	Consider a parameter $\tau \in [1,n]$ whose 
	value will be set later.

	
	\subsection{2D Range $\LIS$ problem}
	We will illustrate the technique by first looking at the 
	$\TRLIS$ problem. The following result is obtained.
	
	\begin{theorem}\label{thm:2d-range-small-lis}
		There is an algorithm for $\TRLIS$ problem 
		with running time $O(m\tau\log^3n + n\tau\log^4n + k)$,
		where $k$ is the cumulative length of all the 
		output subsequences. For all queries $\query \in \Q$ 
		with $\LIS(\mP \cap q) \leq \tau$, the algorithm 
		returns the correct solution. For queries 
		$\query \in \Q$ with $\LIS(\mP \cap q) > \tau$,
		correctness is not guaranteed.
	\end{theorem}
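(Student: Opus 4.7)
The plan is to implement the five-step ``lowest peak / highest base'' scheme of Section~\ref{sec:small-lis} for arbitrary rectangular queries. To drop the assumption that all queries share a fixed splitting $x$-coordinate, I would build a balanced segment tree over the index range $[1,n]$; every query $[x_1,x_2]\times[y_1,y_2]$ is routed to the unique lowest node whose index interval contains both $x_1$ and $x_2$, so the node's midpoint $x^*$ satisfies $x_1\le x^*<x_2$. Each point belongs to only $O(\log n)$ nodes and each query to exactly one, so the aggregate cost inflates by just an $O(\log n)$ factor over the fixed-$x^*$ subproblem. It therefore suffices to solve that subproblem in $\TO(m\tau+n\tau)+O(k)$ time per node.

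Fix such an $x^*$. For each $\beta\in[1,\tau]$, I would precompute $B[i][\beta]$ for every index $i>x^*$: the largest value attainable by the \emph{first} element of a length-$\beta$ increasing subsequence that lies in $(x^*,i]$ and ends at $p_i$. The recurrence
\[B[i][\beta]=\max\{B[j][\beta-1]\mid x^*<j<i,\ a_j<a_i\},\qquad B[i][1]=a_i,\]
is evaluated for each $\beta$ in turn by building a 2D range-max structure on the points $\{(j,a_j)\}$ weighted by $B[j][\beta-1]$ and issuing one south-west quadrant query per $p_i$. Back-pointers to the argmax are stored so that a realizing chain can be reconstructed in $O(\beta)$ time. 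The symmetric array for the left half, tracking the lowest peak of length-$\alpha$ subsequences in $[i,x^*]$ starting at $p_i$, is built in the same manner.

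For a query $q=[x_1,x_2]\times[y_1,y_2]$ assigned to this node, the 2D generalization
\[R(q,\beta)=\max\{B[i][\beta]\mid p_i\in q,\ i>x^*,\ B[i][\beta]>y_1\}\]
and its mirror $L(q,\alpha)$ are each obtained by a single 2D range-max query against the static structures (the $y_1$-threshold is enforced by discarding any answer that does not exceed $y_1$, and analogously for $y_2$ on the left). I then exploit the monotonicity---$L(q,\cdot)$ is nondecreasing and $R(q,\cdot)$ is nonincreasing in their integer argument---to find $\max\{\alpha+\beta:L(q,\alpha)<R(q,\beta)\}$ by a single two-pointer sweep over $\alpha,\beta\in[1,\tau]$, and reconstruct the witnessing chain on both sides via back-pointers. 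Per-query work is $\TO(\tau)+O(|S_q|)$ and preprocessing across all $\beta$ is $\TO(n\tau)$, yielding the claimed bound after multiplication by the segment-tree factor.

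The main obstacle is verifying correctness when $\LIS(\mP\cap q)\le\tau$: one must show that every optimal chain $S\subseteq\mP\cap q$, split at $x^*$ into an $\alpha$-prefix and a $\beta$-suffix with $\alpha+\beta\le\tau$, is witnessed by a compatible pair $(L(q,\alpha),R(q,\beta))$ whose two underlying subsequences lie \emph{entirely} inside $q$. The rectangular shape of $q$ is exactly what makes this work: the filter $B[i][\beta]>y_1$ built into $R(q,\beta)$ forces the base of the captured chain to lie above $y_1$, and since the chain is increasing in both coordinates and ends at a point $p_i\in q$, every intermediate point has $x$-coordinate in $(x^*,x_2]$ and $y$-coordinate in $(y_1,y_2]$, so the whole chain sits inside $q$. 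Without this $y$-filter, the example in Figure~\ref{fig:2d-RL} shows that the captured chain can escape $q$---which is exactly the pitfall the generalized definitions of $L$ and $R$ are designed to eliminate.
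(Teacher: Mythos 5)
Your proposal is correct and follows essentially the same route as the paper: the same divide-and-conquer on a splitting coordinate $x^*$, the same dynamic program for $B$ (and its mirror) evaluated via 2D range-max structures, the same $y$-filtered generalization of $L(q,\alpha)$ and $R(q,\beta)$ to keep the witnessing chains inside the query rectangle, and the same monotonicity-based combination step. The only (harmless) deviations are that you replace the paper's 3D range tree for evaluating $R(q,\beta)$ by a 2D range-max query with a post-hoc threshold check---valid because the threshold is on the very quantity being maximized---and a two-pointer sweep in place of binary search.
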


	The preprocessing phase of
	the algorithm consists of the following steps.
	
	\paragraph{Lowest peaks and highest bases.}
	Consider the special case where there is a 
	vertical line with $x$-coordinate  $x^*$ such that each query range in $\Q$ intersects $x^*$.
	Let $\LP \subseteq \mP$ (resp., $\RP \subseteq \mP$)
	be the set of points with $x$-coordinate less than 
	or equal to (resp., greater than) to $x^*$.
	For an increasing subsequence $S \in \mP$, we define {\em peak} 
	(resp., {\em base}) to be the last 
	(resp., first) element in $S$.
	We will now define two arrays $A$ and $B$ as follows:
	\begin{itemize}
		\item For all $p_i \in \LP$
		and for all $1\leq \alpha \leq \tau$,
		among all increasing subsequences of length
		$\alpha$ in the range $[i, x^*]$ which 
		have $p_i$ as the {\em first} element, 
		let $S$ be the subsequence with the 
		{\em lowest} peak. Then $A(p_i,\alpha)$
		stores the value of the last element in $S$.
		\item For all $p_i \in \RP$ and 
		for all $1\leq \beta \leq \tau$, 
		among all increasing subsequences of length 
		$\beta$ in the range $(x^*, i]$ which have
		$p_i$ as the {\em last} element, let $S$ be 
		the subsequence with the {\em highest} base.
		Then $B(p_i,\beta)$ stores the value of the 
		first element in $S$.
	\end{itemize}

	\paragraph{Efficient computation of $B$.}
	The next step is to compute each entry in 
	$B$ in polylogarithmic time (analogous discussion 
	holds for $A$). 
	The entries will be computed in increasing values
	of $\beta$. 
	Assume that the entries 
	corresponding to sequences of length at most $\beta {-}1$ have been computed. Then the entries 
	$B(p_i,\beta)$'s can be reduced to entries of
	$B(\cdot,\beta-1)$'s as follows:
	\begin{align*}
		B(p_i,\beta)=
		\begin{cases}
			a_{i}, \quad \text{ if $\beta=1$;} \\
			-\infty, \ \ \ \text{if $\beta > 1$ and 
				$a_j > a_i$ for all $\RP \cap (x^*, i)$;} \\
			\max\{B(p_j,\beta{-}1) \mid \ x^*< j < i \text{ and } 
			a_j < a_i\},
			\quad   \text{otherwise}.
		\end{cases}
	\end{align*}
	We will reduce the problem of computing 
	$B(p_i,\beta)$ to the problem of constructing 
	a data structure which can efficiently answer
	{\em 2D range-max queries}. 
	In a 2D range-max problem, we are given $n$ weighted points 
	$\RP$ in 2D. Each point $p$ is associated with a weight.
	For each point $p_i=(i,a_i)\in \RP$, 
	given a query region of the form 
	$\query'=(-\infty, i) \times (-\infty, a_i)$, 
	the goal is to report the point in $ \RP \cap \query'$
	with the maximum weight.
	The 2D range-max problem can be solved in 
	$O(n\log n)$ time (via reduction to the so-called
	{\em 2D orthogonal 
		point location} problem~\cite{st86}).

	Now with each point $p_j\in \RP$, we 
	associate weight $B(p_j,\beta-1)$
	and build a 2D range-max data structure.
	For each point $p_{i} =(i,a_{i}) \in \RP$, the point
	in $\RP \cap 
	((-\infty, i) \times (-\infty, a_{i}))$ with 
	the maximum weight is reported. If point $p_j$ is reported, 
	then set $B(p_i,\beta)=B(p_j,\beta-1)$.
	The correctness follows from the third case in the above 
	dynamic programming.
	Since $\beta \leq \tau$, the time taken 
	to construct $B$ is
	$O(\tau)\times O(n\log n)=O(n\tau\log n)$.

	\paragraph{Data structures to compute $L$ and $R$.}
	During the query algorithm, we will need to compute 
	some of the entries in two-dimensional arrays
	$L$ and $R$ which are defined as follows:
	\begin{description}
		\item[Array $L$:] For any $\query \in \Q$
		and for any 
		$1\leq \alpha \leq \tau$, among 
		all increasing subsequences of length 
		$\alpha$ in $\LP \cap \query$, let $S$ be the subsequence with the {\em lowest} peak.
		Then $L(\query,\alpha)$ is the value of the last element in $S$.
		\item[Array $R$:] For any $\query \in \Q$ and for any
		$1\leq \beta \leq \tau$, among all 
		increasing sequences of length $\beta$ in $\RP \cap \query$,
		let $S$  be the subsequence  with the {\em highest} base. 
		Then $R(\query,\beta)$ is the value of the 
		first element in $S$.
	\end{description}
	
	Each entry in $R$ is connected to entries 
	in $B$ as follows:  
	\[R(q,\beta) \leftarrow \max_{p_i}
	\{ B(p_i,\beta) \mid p_i\in \RP \cap \query \text{ and } B(p_i,\beta) > y_1 \}.\]
	Analogously, each entry in $L$ is 
	connected to entries in $A$ as follows:
	\[L(q,\alpha) \leftarrow \max_{p_i}
	\{ A(p_i,\alpha) \mid p_i\in \LP \cap \query \text{ and } A(p_i,\alpha) < y_2 \}.\]
	
	Fix a value of $\beta$. We will now build a 
	data structure to efficiently compute $R(\query, \beta)$'s.
	For each point $p_i=(i,a_i) \in \RP$, map it
	to a point $p_i'=(i, a_i, B(p_i, \beta))$ 
	with weight $w(p_i') \leftarrow B(p_i,\beta)$.
	Let $\RP'$ be the collection of mapped 3D points.
	Build a 3D vanilla range tree~\cite{bcko08} on $\RP'$.
	Given a query $\query \in \Q$, it is transformed 
	to a 3D cuboid $\query'=\query \times (y_1,+\infty)$ and
	the point with the maximum weight in 
	$\RP' \cap \query'$ is reported efficiently.
	The time taken to construct the range tree is 
	$O(n\log^3n)$. We will repeat this procedure
	for all values of $\beta \in [1,\tau]$.
	Therefore, the total construction time will be
	$O(n\tau \log^3n)$. Analogously, 
	construct a data structure to efficiently 
	compute $L(\query, \alpha)$'s.

	\paragraph{Recursion.} Let $\D$ be the 
	data structure built above to handle
	queries in $\Q$ which intersect the vertical 
	line with $x$-coordinate $x^*$. 
	We will handle the general case via recursion.
	Let $x^*=n/2$, and let $\mP_{\leq} \subseteq \mP$ 
	(resp., $\mP_{>} \subseteq \mP$) be the set of 
	points with $x$-coordinate less than or equal to 
	(resp., greater than) $x^*$. Recursively
	build data structure $\D_{\leq}$ (resp., $\D_{>}$)
	based on pointset $\mP_{\leq}$ (resp., $\mP_{>}$).
	The base case of $|\mP|=1$ can be 
	handled trivially. Let $T(n)$ be the total 
	preprocessing time. Then, 
	\[T(n) \leq 2\cdot T(n/2) + O(n\tau\log^3n),\]
	which solves to $T(n)=O(n\tau\log^4n)$.

	\begin{lemma}
		The preprocessing time of the algorithm is 
		$O(n\tau\log^4n)$.
	\end{lemma}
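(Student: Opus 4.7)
The plan is to bound the preprocessing work done at a single level of the recursion (at the current pivot $x^*$), and then solve the recurrence $T(n) \le 2T(n/2) + (\text{per-level cost})$ obtained from the split at $x^* = n/2$.

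First I would tally the cost of building the arrays $A$ and $B$ at a fixed $x^*$. For each $\beta \in [1,\tau]$, the recurrence for $B(\cdot,\beta)$ reduces to a single 2D range-max problem on the at most $n$ points of $\RP$ weighted by $B(\cdot,\beta-1)$; the excerpt argues that this can be handled in $O(n \log n)$ time via reduction to 2D orthogonal point location~\cite{st86}, after which all the $B(p_i,\beta)$ values are read off in a batched scan. Summing over $\beta$, the cost of populating $B$ is $O(n\tau \log n)$, and symmetrically for $A$.

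Next I would bound the construction of the 3D range trees that support $L(q,\alpha)$ and $R(q,\beta)$ at $x^*$. For each fixed $\beta$ we lift each $p_i \in \RP$ to a 3D point with weight $B(p_i,\beta)$ and build a vanilla range tree in $O(n\log^3 n)$ time (by the textbook bound~\cite{bcko08}); doing this for all $\tau$ values of $\beta$ costs $O(n\tau \log^3 n)$, and the analogous construction for $L$ on $\LP$ contributes the same. Since this dominates the $O(n\tau \log n)$ cost of $A$ and $B$, the total non-recursive work at $x^*$ is $O(n\tau \log^3 n)$.

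Finally I would combine these estimates with the recursive calls on $\mP_{\le}$ and $\mP_{>}$. Because the two subproblems have sizes summing to $n$ and the per-node cost is linear in the current subset size (times $\tau \log^3 n$), the recurrence
\[
T(n) \;\le\; 2\,T(n/2) + O(n\tau \log^3 n),
\qquad T(1) = O(1),
\]
is valid, and a standard recursion-tree argument---each of the $O(\log n)$ levels contributes $O(n\tau \log^3 n)$ total work since the subproblem sizes at any given depth sum to at most $n$---yields $T(n) = O(n\tau \log^4 n)$, as claimed. I do not foresee any real obstacle; the only point requiring care is checking that the range-tree construction genuinely dominates the DP-update cost across all $\tau$ layers, so that the $\log^3 n$ factor (and not merely $\log n$) drives the per-level bound before the extra $\log n$ from the recursion depth is added.
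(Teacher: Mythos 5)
Your proposal matches the paper's argument: the per-level cost at a pivot $x^*$ is dominated by the $O(n\tau\log^3 n)$ construction of the 3D range trees for $L$ and $R$ (the $O(n\tau\log n)$ DP for $A$ and $B$ being lower order), and the recurrence $T(n)\le 2T(n/2)+O(n\tau\log^3 n)$ then gives $O(n\tau\log^4 n)$. This is exactly the paper's proof, correctly executed.
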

	
	\paragraph{Query algorithm.}
	The query algorithm consists of the following steps.
	Let $\Q_{x^*} \subseteq \Q$ be the set of queries
	which intersect the vertical line with $x$-coordinate
	$x^*=n/2$. We will first handle $\Q_{x^*}$. For each $\query \in \Q_{x^*}$,  
	compute $R(\query,\beta)$, for all $1\leq \tau \leq \beta$, 
	by querying the corresponding 3D range trees.
	Analogously, compute $L(\query, \alpha)$ 
	for all $1\leq \tau \leq \alpha$. 
	
	A pair $\left(L(\query, \alpha), R(\query, \beta)\right)$
	is defined
	to be {\em compatible} if $L(\query,\alpha) < R(\query, \beta)$, i.e., 
	it is possible to ``stitch'' the 
	sequences corresponding to $L(\query,\alpha)$ and 
	$R(\query, \beta)$ to obtain 
	an increasing sequence of length $\alpha +\beta$.
	For a 2D query range $\query$, 
	our goal is to find the compatible pair 
	$\left(L(\query, \alpha), R(\query, \beta)\right)$ which 
	maximizes the value of $\alpha + \beta$, i.e., 
	\begin{equation}\label{eq:query-small-lis}
		\LIS(\mP \cap \query)=\max_{\alpha,\beta}\{\alpha + \beta   \mid 
		L(\query, \alpha)< R(\query, \beta)\}.
	\end{equation}
	
	To efficiently compute $\LIS(\mP \cap \query)$, 
	we will use the {\em monotonicity} property of 
	$R$: for a fixed query $\query$, the value of 
	$R(\query, \beta)$
	decreases as the value of $\beta$ increases, i.e., 
	for any $1\leq \beta \leq \beta' \leq \tau$, we have 
	$R(\query, \beta) \geq R(\query, \beta')$. 
	For each $1\leq i \leq \tau$, 
	the largest $\beta_i$ for which 
	$(L(\query,i), R(\query, \beta_i))$
	is compatible can be found in $O(\log \tau)$ time by doing 
	a binary search on 
	the monotone sequence 
	$(R(\query,1), R(\query, 2),\ldots, R(\query, \tau))$.
	Finally,
	report $\max_{i}(i+\beta_i)$ as the length of the 
	$\LIS$ for 
	$\mP \cap \query$. By appropriate bookkeeping, the corresponding $\LIS$
	can be reported in time proportional to its length.
	
	Let $\Q_{<} \subseteq \Q$ (resp., $\Q_{>} \subseteq \Q$)
	be the set of queries which lie completely to 
	the left (resp., right) of the vertical line with 
	$x$-coordinate $x^*$.
	Finally, recurse on pointset $\mP_{\leq}$ and 
	query set $\Q_{<}$, and 
	recurse on pointset $\mP_{>}$ and query set $\Q_{>}$.

	\begin{lemma}
		The time taken answer the query ranges in 
		$O(m\tau\log^3n + k)$.
	\end{lemma}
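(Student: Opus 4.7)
The plan is to bound, level by level in the recursion, the cost of handling queries that intersect the splitting line $x^\ast$, and then sum over recursion levels. Fix one recursion level with splitting coordinate $x^\ast$ and consider a single query $q \in \Q_{x^\ast}$. First I would argue that computing the entire row $\bigl(R(q,1),R(q,2),\ldots,R(q,\tau)\bigr)$ costs $O(\tau\log^3 n)$: for each $\beta \in [1,\tau]$ there is a dedicated 3D vanilla range tree built in the preprocessing stage, and querying it with the 3D box $q \times (y_1,+\infty)$ for the maximum-weight point returns $R(q,\beta)$ in $O(\log^3 n)$ time by the standard range-tree bound. Computing $L(q,\alpha)$ for all $\alpha$ is symmetric and costs another $O(\tau\log^3 n)$.

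Next I would use the monotonicity of $R(q,\cdot)$ already noted: $R(q,\beta)$ is nonincreasing in $\beta$. Hence for each $\alpha \in [1,\tau]$, the largest $\beta_\alpha$ with $L(q,\alpha) < R(q,\beta_\alpha)$ is found by a binary search over the sorted-by-construction sequence $R(q,\cdot)$ in $O(\log\tau)$ time, so solving the optimization in \eqref{eq:query-small-lis} and returning $\max_\alpha(\alpha + \beta_\alpha)$ costs $O(\tau\log\tau)$, dominated by the $O(\tau\log^3 n)$ above.

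For the reporting part I would rely on pointers laid down during the dynamic programs that compute $A$ and $B$: whenever an entry $B(p_i,\beta)$ was set from $B(p_j,\beta-1)$ via the 2D range-max step, store the pointer $p_i \mapsto p_j$, and symmetrically for $A$. When answering $q$, the pair $(L(q,\alpha^\ast), R(q,\beta^\ast))$ that achieves the maximum identifies a specific $p_j \in \LP \cap q$ and a specific $p_{j'} \in \RP \cap q$ (the witnesses returned by the two range-max queries). Following the stored pointers from these two witnesses outputs the left chain of length $\alpha^\ast$ and the right chain of length $\beta^\ast$ in time proportional to their total length, i.e., in $O(k_q)$ time for the query, where $k_q$ is the length of the reported subsequence.

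Finally I would sum over the recursion. Each query $q \in \Q$ lies in $\Q_{x^\ast}$ for exactly one splitting coordinate along the recursion tree (the unique node whose $x^\ast$ separates the endpoints of $q$'s $x$-range), since $\Q_<$ and $\Q_>$ strictly contain queries disjoint from $x^\ast$ and only these are recursed upon. Therefore the total work across all recursion levels is $\sum_{q \in \Q} O(\tau \log^3 n + k_q) = O(m\tau\log^3 n + k)$. The main point requiring care is confirming that the reporting pointers suffice to recover the subsequence witnesses from the range-max answers in $O(k_q)$ time, but this is just bookkeeping appended to the already-performed dynamic programs.
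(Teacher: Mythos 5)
Your proposal is correct and follows essentially the same route as the paper: per query, $O(\tau\log^3 n)$ for the 3D range-tree lookups of all $L(q,\alpha)$ and $R(q,\beta)$, $O(\tau\log\tau)$ for the binary searches using monotonicity, each query charged to the unique recursion node whose splitting line it straddles, and $O(k_q)$ reporting via bookkeeping/pointers. The extra detail you give on the reporting pointers and the recursion accounting is exactly the ``appropriate bookkeeping'' the paper invokes implicitly.
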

	\begin{proof}
		Fix a query range $q\in \Q$. Querying the
		3D range trees takes $O(\log^3n)$ time. 
		As such, the time taken to 
		compute $R(\query, \beta)$'s and $L(\query, \alpha)$'s 
		will be $O(\tau\log^3n)$. Next, computing the 
		compatible pair which achieves the quantity
		$\max_{i}(i+\beta_i)$ takes $O(\tau\log \tau)$ time.
		Therefore, the time spent per query is $O(\tau\log^3n + \tau\log\tau)=O(\tau\log^3n)$. 
		
		Assigning queries in $\Q$ 
		to appropriate nodes in the recursion tree takes only
		$O(m\log n)$ time and is not the dominating term.
	\end{proof}

	
	\subsection{Colored 2D Range $\LIS$ problem}
	The solution for $\CRLIS$ is obtained by 
	solving the $\TRLIS$ problem independently 
	for each color. Specifically,
	for each color $c\in \C$, solve the (uncolored)
	$\TRLIS$ problem for $\mP_c$ using 
	Theorem~\ref{thm:2d-range-small-lis}. The length 
	version of the $\TRLIS$ problem will output 
	for each $\query \in \Q$ and each $c\in \C$, the value of 
	$\LIS(\mP_c \cap \query)$. Then for each 
	$\query \in \Q$, report the color $c_{\query}$
	which maximizes the value of $\LIS(\mP_{c_{\query}} \cap \query)$ and the corresponding $\LIS$. 
	Ignoring the $k$ term, the running time of the algorithm 
	is $|\C|$ times the running time of the $\TRLIS$ problem.
	
	\begin{theorem}\label{thm:col-2d-small-lis}
		There is an algorithm for $\CRLIS$ problem 
		with running time $O(m\tau|\C|\log^3n + n\tau|\C|\log^4n + k)$,
		where $k$ is the cumulative length of all the 
		output subsequences. For all queries $\query \in \Q$ 
		with $\LIS(\mP_{c_{\query}} \cap q) \leq \tau$, 
		the algorithm 
		returns the correct solution. For queries 
		$\query \in \Q$ with $\LIS(\mP_{c_{\query}} \cap q) > \tau$,
		correctness is not guaranteed.
	\end{theorem}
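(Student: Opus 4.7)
The plan is to obtain Theorem~\ref{thm:col-2d-small-lis} by running the uncolored algorithm of Theorem~\ref{thm:2d-range-small-lis} independently for each color class and then taking, per query, the color that maximizes the reported length. Concretely, for every $c\in\C$ I would form the subset $\mP_c\subseteq\mP$ and build a separate instance of the Theorem~\ref{thm:2d-range-small-lis} data structure on input $\mP_c$ with the same parameter $\tau$ and the same query collection $\Q$. During the query phase, for each $\query\in\Q$ I would poll all $|\C|$ data structures to obtain candidate values $\{\LIS(\mP_c\cap\query)\}_{c\in\C}$ and their witness subsequences (using the bookkeeping already built into the first technique), and then output the color $c_\query$ that attains the maximum together with its associated subsequence.

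For correctness, the key observation is that $c_\query$ is by definition the color maximizing $\LIS(\mP_c\cap\query)$ over $c\in\C$. Hence the hypothesis $\LIS(\mP_{c_\query}\cap\query)\le\tau$ automatically forces $\LIS(\mP_c\cap\query)\le\tau$ for every $c\in\C$, which is exactly the regime in which each per-color instance of Theorem~\ref{thm:2d-range-small-lis} returns the correct value. Taking the maximum over the $|\C|$ correct per-color answers therefore yields the true winner and the true monochromatic subsequence for $\query$.

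For the running time, summing the preprocessing cost $O(|\mP_c|\tau\log^4 n)$ of Theorem~\ref{thm:2d-range-small-lis} over colors gives at most $O(n\tau|\C|\log^4 n)$ (loosely, since $\sum_c |\mP_c|=n$). For queries, each $\query\in\Q$ is processed against all $|\C|$ color-specific data structures at cost $O(\tau\log^3 n)$ each, totaling $O(m\tau|\C|\log^3 n)$. Outputting the winning subsequences accounts for the additive $O(k)$ term.

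There is essentially no real obstacle here beyond this simple observation about the maximum: the algorithm is a black-box reduction, and the only subtlety is guaranteeing that ``small $\LIS$ for the winner'' is propagated to ``small $\LIS$ for every color,'' which is immediate. All remaining work is bookkeeping and a straightforward summation of the per-color bounds from Theorem~\ref{thm:2d-range-small-lis}.
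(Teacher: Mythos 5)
Your proposal is correct and matches the paper's own proof: the paper likewise solves the uncolored $\TRLIS$ problem independently for each color class via Theorem~\ref{thm:2d-range-small-lis} and, for each query, reports the color maximizing the returned length, bounding the total cost by $|\C|$ times the single-color cost. Your explicit remark that $\LIS(\mP_{c_{\query}}\cap \query)\le\tau$ forces every color's restricted $\LIS$ to be at most $\tau$ is the same (implicit) correctness observation the paper relies on.
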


	\section{Second technique: Handling large $\LIS$}
	\label{sec:second-tec}
	In this section, we describe a technique 
	which efficiently handles queries with large $\LIS$.
	
	\subsection{2D Range $\LIS$ problem}
	We will first consider the
	$\TRLIS$ problem. Consider a parameter $\tau \in [1,n]$ 
	whose value will be set later. The following result is obtained.

	\begin{theorem}\label{thm:large-LIS-2d}  
		There is an algorithm for the $\TRLIS$ problem
		with running time $O\left(\frac{mn}{\tau}\log^5n + 
		\frac{n^2}{\tau}\log^4n + k \right)$,
		where $k$ is the cumulative length of all the 
		output subsequences. The bound on the running 
		time holds with high probability. 
		For all queries $\query \in \Q$ with 
		$\LIS(\mP \cap q) \geq \tau$, with high probability, the algorithm returns the correct solution.
	\end{theorem}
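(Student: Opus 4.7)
}
My plan is to implement the random-sampling strategy sketched in Section~\ref{sec:small-lis}. I would sample each $p_i \in \mP$ independently with probability $c \log n / \tau$ for a sufficiently large constant $c$, obtaining a set $\R$ of size $\TO(n/\tau)$ with high probability. The key probabilistic step is Lemma~\ref{lem:logn-stitching} in the 2D setting: for every query rectangle $q$ with $\LIS(\mP \cap q) \geq \tau$ and any one fixed LIS $S_q$ of $\mP \cap q$, the intersection $\R \cap S_q$ is nonempty (in fact of size $\Omega(\log n)$) with high probability. I would prove this via a Chernoff bound on the $|S_q| \geq \tau$ independent Bernoulli trials, combined with a union bound over the $O(n^4)$ combinatorially distinct query rectangles (each paired with a canonical $S_q$).

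For each $p_i \in \R$, I would preprocess two data structures that realize the decomposition from equation~\ref{eq:decomposable}. Concretely, run the standard $O(n \log n)$ LIS DP restricted to $\mP \cap NE(p_i)$, computing for each $p_j \in NE(p_i)$ the length $f_i^{NE}(p_j)$ of the longest increasing sequence in $NE(p_i)$ that starts at $p_i$ and ends at $p_j$; store backpointers for eventual reporting. Then weight each such $p_j$ by $f_i^{NE}(p_j)$ and build a 2D vanilla range-max tree~\cite{bcko08} in $O(n \log^3 n)$ time with $O(\log^3 n)$ query time. I would symmetrically build an analogous structure for $SW(p_i)$ that handles sequences ending at $p_i$. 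Over all $|\R|$ samples, preprocessing totals $\TO(n^2/\tau)$, matching the theorem.

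At query time, for each $q$ I would retrieve $\R \cap q$ via a range-reporting structure on $\R$, and for each $p_i \in \R \cap q$ query the two range-max structures to obtain $\alpha_i = \LIS(\mP \cap q \cap NE(p_i))$ and $\beta_i = \LIS(\mP \cap q \cap SW(p_i))$ in polylogarithmic time, returning $\max_i(\alpha_i + \beta_i - 1)$ as the length. Under the high-probability event of Lemma~\ref{lem:logn-stitching}, whenever $\LIS(\mP \cap q) \geq \tau$ some stitching element of $q$ lies in $\R \cap q$, so this maximum attains $\LIS(\mP \cap q)$; since $\alpha_i + \beta_i - 1 \leq \LIS(\mP \cap q)$ for every candidate $p_i$, the reported value is correct. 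The actual subsequence is recovered by following the stored backpointers from the winning $p_i$ in both quadrants, contributing the $O(k)$ term. The total query time is $\TO(m \cdot |\R|) = \TO(mn/\tau)$, and $|\R| = \TO(n/\tau)$ holds with high probability by a Chernoff bound.

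The main obstacle will be justifying the range-max substep: I must argue that the maximum of $f_i^{NE}$ over $\mP \cap q \cap NE(p_i)$ exactly equals $\LIS(\mP \cap q \cap NE(p_i))$, rather than undercounting it. This uses the fact that $p_i$ is the componentwise minimum of $NE(p_i)$, so any increasing sequence in $\mP \cap q \cap NE(p_i)$ can be prepended with $p_i$ without losing monotonicity; hence sequences starting at $p_i$ already realize the optimum for that quadrant, and likewise for sequences ending at $p_i$ in $SW(p_i)$. Combined with the union bound in Lemma~\ref{lem:logn-stitching}, this yields the claimed high-probability correctness and runtime bounds.
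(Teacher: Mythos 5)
Your proposal is correct and follows essentially the same route as the paper: the same random sample $\R$ with the same Chernoff-plus-union-bound argument (Lemma~\ref{lem:logn-stitching}), the same stitching-element decomposition into $NE(p_i)$ and $SW(p_i)$ sub-problems, and the same restricted-$\LIS$ dynamic program producing the weights $w(p_j)$. The only difference is cosmetic: where the paper stores each pair $(p_i,p_j)$ as a weighted rectangle $B_{ij}$ and answers a 4D rectangle range-max query, you exploit the fact that $p_i\in\query$ is already known at query time and answer a 2D point range-max query over the weighted points of $\mP\cap NE(p_i)$; the two are equivalent here (Lemma~\ref{lem:ne-max-box}), and your version is, if anything, slightly leaner on logarithmic factors.
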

	
	\paragraph{Idea-1: Stitching elements.}
	Suppose an oracle reveals that point $p_i\in \mP$ lies in the 
	$\LIS$ of  range $\query=[x_1,x_2] \times [y_1,y_2]$. 
	Then we claim that the problem becomes decomposable.
	We will define some notation before proceeding. 
	For a point $p_i=(i,a_i) \in \mP$ corresponding to 
	an element $a_i \in \mS$, define the {\em north-east} region of $p_i$ as $NE(p_i) = \{ (x,y) \mid x \geq i \text{ and } y \geq a_i \}$.
	Analogously, define the {\em north-west}, the 
	{\em south-west} and the {\em south-east} 
	region of $p_i$ which are denoted by 
	$NW(p_i), SW(p_i) \text{ and } SE(p_i)$, 
	respectively.
	Then it is easy to observe that,
	\begin{equation}\label{eq:decomposable-main}
		\LIS(\mP \cap \query)=\LIS(\mP \cap \query \cap 
		NE(p_i)) 
		+ \LIS(\mP \cap \query \cap SW(p_i)) -1.
	\end{equation}
	See Figure~\ref{fig:stitch-2D}
	for an example. In other words, knowing that $p_i$ belongs to the $\LIS$ decomposes the 
	original $\LIS$ problem into two  $\LIS$ sub-problems which can be 
	computed independently. 
	In such a case, we refer to point $p_i$ as a {\em stitching element},
	which is formally defined below.
	\begin{definition}
		{\bf(Stitching element)} For a range $\query$, 
		fix any $\LIS$ of $\mP \cap \query$ and  call it $S$. 
		Then each element in $S$ is defined to be a stitching element w.r.t. $\query$.
	\end{definition}
	
	The goal of our algorithm is to:
	\begin{center}
		{\em Construct a small-sized set $\R \subseteq \mP$ 
			such that, for any $\query\in \Q$,  at least one 
			stitching element w.r.t. $\query$ is contained in 
			$\R \cap \query$.}
	\end{center}
	

	\begin{figure*}
		\centering
		\includegraphics{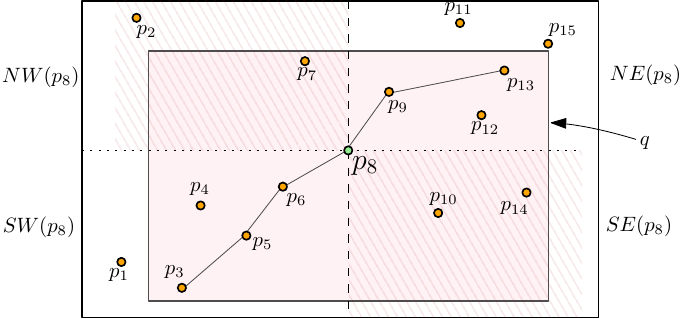}
		\caption{A collection of fifteen elements. 
			Given a query range $q$ (shown in pink), suppose an oracle reveals
			that $p_8$ lies in the $\LIS$ of $\mP \cap \query$.
			Then observe that we can discard the points 
			in $SE(p_8)$ and $NW(p_8)$ which are shown 
			as shaded regions.  The $\LIS$ of 
			$\mP \cap \query \cap NE(p_8)$ will be 
			$(p_8,p_9, p_{13})$ and and the $\LIS$
			of $\mP \cap \query \cap SW(p_8)$ will be 
			$(p_3,p_5,p_6,p_8)$. Therefore, $\LIS$ of 
			$\mP \cap \query$ will be 
			$(p_3, p_5, p_6, p_8, p_9, p_{13})$.}
		\label{fig:stitch-2D}
	\end{figure*}

	\paragraph{Idea-2: Random sampling.} By using the fact 
	that $\LIS(\mP \cap q) \geq \tau$, it is possible 
	to construct a set $\R$ of size 
	$O\left(\frac{n\log n}{\tau}\right)$ via random sampling.
	For each $1\leq i\leq n$, sample  $p_i \in \mP$ independently 
	with probability $\frac{c\log n}{\tau}$, where $c$ is a sufficiently 
	large constant. Let $\R \subseteq \mP$ be the set of sampled points. 
	Then we can establish the following 
	connection between $\R$ and the stitching elements.
	
	\lemlogn*
	\begin{proof}
		Fix a range $\query\in \Q$. 
		For each point $e\in \sq$, 
		let $X_e$  be an indicator random 
		variable which is one if $e \in \R$, otherwise it is zero. Next, define
		another random variable $X=|\R \cap \sq|$. Then,
		\begin{align*}
			{\bf E}[X]=\sum_{e\in \sq} {\bf E}[X_e] = |\sq|\cdot p 
			\geq \tau \cdot \frac{c\log n}{\tau} 
			=c\log n,
		\end{align*}
		where we used the fact $|\sq| \geq \tau$. Next, 
		for a sufficiently large constant $c'$, 
		we observe that
		\begin{align*}
			{\bf Pr}\left(X < \frac{c\log n}{2}\right) \leq 
			{\bf Pr}\left(X < \frac{{\bf E}[X]}{2}\right) \leq 
			e^{-\frac{{\bf E}[X]}{8}} \leq \frac{1}{n^{c'}},
		\end{align*}
		where the second inequality follows by setting 
		$\delta=1/2$ in the following version of Chernoff 
		bound: ${\bf Pr}(X < (1-\delta){\bf E}[X]) \leq e^{-\delta^2 {\bf E}[X]/2}$, 
		for $0 < \delta < 1$. 
		Via a straightforward application of the union bound, 
		with high probability none of the 
		ranges $\query\in \Q_L$ can have less than $\frac{c\log n}{2}$ 
		elements in $\R \cap \sq$ (when $c\gg 16$). 
	\end{proof}
	
	\paragraph{Computing restricted-$\LIS$'s efficiently.}
	For each query $\query \in \Q$, if $p_i \in \R$ lies inside
	$\query$, then we want to efficiently compute 
	the quantities $\LIS(\mP \cap \query \cap 
	NE(p_i))$ and $\LIS(\mP \cap \query \cap SW(p_i))$ 
	(as shown in equation~\ref{eq:decomposable-main}). 
	To enable that, we will now compute {\em restricted-$\LIS$'s} between some specifically 
	chosen pairs of points.
	
	Consider a point $p_i=(i,a_i) \in \R$. 
	Scan the points $\mP \cap NE(p_i)$ in increasing
	order of their $x$-coordinate value. We will assign a
	weight $w(p_j)$ for each point encountered. 
	As a base case, we will assign $w(p_i) \leftarrow 1$.
	At a general step, if we encounter point $p_j$, then we 
	\begin{equation}\label{eq:restricted-lis}
		w(p_j) \leftarrow 1 + \max_{p_{\alpha}=(\alpha,a_{\alpha}) \in (\mP \cap NE(p_i))}
		\{ w(p_{\alpha}) \mid \alpha < j \text{ and } a_{\alpha} < a_j \}.
	\end{equation}
	See Figure~\ref{fig:restricted-lis} for an 
	example. Repeat this process for each point in 
	$\R$. Now we claim the following.

	\begin{lemma}\label{lem:ne-max}
		For a given query $\query \in \Q$, let $p_i \in \R
		\cap \query$. Then $\LIS(\mP \cap q \cap NE(p_i))
		\leftarrow \max_{p_j \in (\mP \cap q \cap NE(p_i))}\{w(p_j)\}$.
	\end{lemma}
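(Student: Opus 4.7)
The plan is to prove the two-sided equality $\LIS(\mP \cap q \cap NE(p_i)) = \max_{p_j \in \mP \cap q \cap NE(p_i)} w(p_j)$ by combining the standard correctness of the $\LIS$ dynamic program with one geometric containment observation, which I would state and prove first: for any $p_j \in q \cap NE(p_i)$, every increasing subsequence of $\mP \cap NE(p_i)$ ending at $p_j$ lies entirely in $q \cap NE(p_i)$. Indeed, any intermediate element $p_\alpha$ satisfies $i \le \alpha \le j$ and $a_i \le a_\alpha \le a_j$ by increasingness inside $NE(p_i)$; combining with $p_i, p_j \in q$ (which give $x_1 \le i$, $j \le x_2$, $y_1 \le a_i$, $a_j \le y_2$) forces $x_1 \le \alpha \le x_2$ and $y_1 \le a_\alpha \le y_2$, i.e., $p_\alpha \in q$.

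Next, I would verify by induction on $j$ in the scan order used to compute the weights that $w(p_j)$ equals the length of the longest increasing subsequence of $\mP \cap NE(p_i)$ ending at $p_j$. This is a direct unwinding of the recurrence~(\ref{eq:restricted-lis}) with base case $w(p_i)=1$, and the argument is identical to textbook $\LIS$ DP correctness. I would also record an auxiliary remark that an optimal witness can always be assumed to begin at $p_i$: under the distinct-values assumption, every element of $NE(p_i) \setminus \{p_i\}$ strictly dominates $p_i$ in both coordinates, so prepending $p_i$ to a subsequence not already starting there yields a longer increasing subsequence inside $\mP \cap NE(p_i)$.

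The two inequalities then drop out. For $\max_{p_j} w(p_j) \le \LIS(\mP \cap q \cap NE(p_i))$, let $p_{j^*}$ attain the maximum and let $S$ be an optimal witness for $w(p_{j^*})$ in $\mP \cap NE(p_i)$ ending at $p_{j^*}$; by the geometric observation, $S \subseteq q \cap NE(p_i)$, so $w(p_{j^*}) = |S| \le \LIS(\mP \cap q \cap NE(p_i))$. Conversely, let $T$ be an $\LIS$ of $\mP \cap q \cap NE(p_i)$, which (by prepending) may be assumed to start at $p_i$ and end at some $p_{j^{**}} \in \mP \cap q \cap NE(p_i)$; then $T$ is a valid increasing subsequence in $\mP \cap NE(p_i)$ ending at $p_{j^{**}}$, so $w(p_{j^{**}}) \ge |T| = \LIS(\mP \cap q \cap NE(p_i))$.

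The main obstacle is conceptual rather than technical: the weights $w(\cdot)$ are precomputed with no knowledge of $q$, so a priori an optimal witness for $w(p_j)$ could wander outside $q$ even when both its endpoints lie inside $q$. The geometric containment observation above rules out exactly this pathology, and is the only place in the argument where the axis-aligned-rectangle shape of $q$ is actually exploited; once it is in place, the rest is routine $\LIS$ DP bookkeeping.
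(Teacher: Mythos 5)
Your proof is correct and follows essentially the same route as the paper's, which compresses the whole argument into the single assertion that $w(p_j)$ is the length of the longest increasing subsequence with first point $p_i$ and last point $p_j$, leaving the rectangle-containment step (endpoints in $q$ force the whole chain into $q$) implicit since it was already noted in the introduction. Your version merely makes that containment observation and the DP induction explicit.
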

	\begin{proof}
		The correctness follows from the fact that 
		$w(p_j)$ captures the length of the $\LIS$
		which has $p_i$ (resp., $p_j$) as the first (resp., last)
		point.
	\end{proof}
	\begin{figure}
		\centering
		\includegraphics{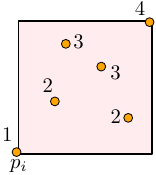}
		\caption{$p_i$ is a point in $\R$ 
			and $NE(p_i)$ has six points. The weight computed for each point by Equation~\ref{eq:restricted-lis} is shown 
			besides each point.}
		\label{fig:restricted-lis}
	\end{figure}
	
	\begin{lemma}\label{lem:pi-entries}
		The computation of restricted-$\LIS$'s, i.e., 
		computation of
		$w(p_j)$, for all $p_j \in \mP \cap NE(p_i)$
		can be done in $O(n\log n)$ time.
	\end{lemma}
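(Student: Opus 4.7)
The plan is to reduce the computation of all $w(p_j)$ for $p_j \in \mP \cap NE(p_i)$ to the classical $O(n\log n)$ weighted longest increasing subsequence subroutine, since Equation~\ref{eq:restricted-lis} is a verbatim weighted-$\LIS$ recurrence restricted to the point set $\mP \cap NE(p_i)$. First I would extract the subset $\mP \cap NE(p_i)$ and sort its points in increasing order of $x$-coordinate; since the original indices already give this order, this is essentially a linear scan through $\mP$ skipping points outside $NE(p_i)$. The relevant domain has at most $n$ points, so building this list costs $O(n)$.

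Next I would maintain a prefix-maximum data structure keyed by the $y$-coordinates of points in $\mP \cap NE(p_i)$. Concretely, after an $O(n\log n)$ coordinate-compression of the $y$-values, I would use a Fenwick tree (or any balanced BST augmented with subtree maxima) that supports two operations in $O(\log n)$ time each: (i) \textsc{Update}$(y,w)$ which assigns weight $w$ at the position corresponding to value $y$ (taking the max with any previously stored value), and (ii) \textsc{PrefixMax}$(y)$ which returns the maximum stored weight over all positions with $y$-value strictly less than $y$. Processing the points of $\mP \cap NE(p_i)$ in increasing $x$-order automatically enforces the $\alpha < j$ constraint; for each visited $p_j=(j,a_j)$ I would query $M \leftarrow \textsc{PrefixMax}(a_j)$, set $w(p_j) \leftarrow 1 + M$ (with the convention that an empty maximum contributes $0$, so that $w(p_i)=1$ as required by the base case), and then call $\textsc{Update}(a_j, w(p_j))$.

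Correctness follows directly from Equation~\ref{eq:restricted-lis}: at the moment $p_j$ is processed, the structure contains exactly the weights of those $p_\alpha \in \mP \cap NE(p_i)$ with $\alpha < j$, and the prefix-max query restricts further to those with $a_\alpha < a_j$, which is precisely the set over which the maximum in Equation~\ref{eq:restricted-lis} is taken. Since every point contributes one update and one query, each of cost $O(\log n)$, and the initial compression and scan cost $O(n\log n)$, the total running time is $O(n\log n)$.

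There is no real obstacle here; the only thing worth double-checking is the semantics of the base case and the ``empty maximum'' convention, so that $p_i$ itself (which is the $x$-minimum of $NE(p_i)$) receives $w(p_i)=1$ and every other point in the region receives at least weight $1$ when no dominating predecessor inside $NE(p_i)$ exists. With that convention in place, the reduction to prefix-max queries is immediate and the time bound follows.
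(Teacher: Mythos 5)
Your proposal is correct and matches the paper's argument, which likewise performs a linear scan to extract $\mP \cap NE(p_i)$ and then invokes a standard $O(n\log n)$ $\LIS$ routine on that subset; you have simply spelled out the prefix-maximum implementation that the paper leaves as "any standard $O(n\log n)$ algorithm." No gap.
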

	\begin{proof}
		Perform a linear scan of $\mP$ to identify
		the points lying in $\mP \cap NE(p_i)$.
		Then any standard $O(n\log n)$ time algorithm
		to compute $\LIS$ can be run
		on $\mP \cap NE(p_i)$ to compute $w(p_j)$'s.
	\end{proof}

	\begin{lemma}\label{lem:all-pi-entries}
		There is an algorithm,  which with high probability,
		computes restricted-$\LIS$'s for all points in 
		$\R$ in
		$O\left(\frac{n^2}{\tau}\log^2n\right)$ time. 
	\end{lemma}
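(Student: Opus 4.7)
The plan is to combine a high-probability upper bound on $|\R|$ with the per-point bound from Lemma~\ref{lem:pi-entries}. First I would observe that since each $p_i \in \mP$ is independently included in $\R$ with probability $p = \frac{c \log n}{\tau}$, we have $\mathbf{E}[|\R|] = \frac{cn \log n}{\tau}$. A standard Chernoff upper tail bound (say with $\delta = 1$) then gives that $|\R| \le \frac{2cn \log n}{\tau}$ with probability at least $1 - n^{-c'}$ for any desired constant $c'$ (by making $c$ large enough). Thus with high probability $|\R| = O\!\left(\frac{n \log n}{\tau}\right)$.

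Next I would invoke Lemma~\ref{lem:pi-entries} independently for each $p_i \in \R$: the lemma provides an $O(n \log n)$ time procedure to compute $w(p_j)$ for every $p_j \in \mP \cap NE(p_i)$, essentially by extracting the subset $\mP \cap NE(p_i)$ in a linear scan and then running any standard patience-sort style $O(n \log n)$ LIS-with-witnesses routine to obtain all the length-of-LIS-ending-at-$p_j$ values. Conditioning on the high-probability event from the previous paragraph, summing this cost over all $p_i \in \R$ gives a total running time of
\[
O(|\R|) \cdot O(n \log n) \;=\; O\!\left(\frac{n \log n}{\tau}\right) \cdot O(n \log n) \;=\; O\!\left(\frac{n^2 \log^2 n}{\tau}\right),
\]
as claimed.

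Two small points worth flagging. First, the routine from Lemma~\ref{lem:pi-entries} must be run \emph{fresh} for each $p_i$, since the weights $w(\cdot)$ depend on the anchor $p_i$; there is no shared data structure across different anchors in this step, so the accounting is simply a product. Second, the overall statement is probabilistic only through the bound on $|\R|$: the computation of the $w$-values themselves is deterministic and always correct.

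I do not expect any real obstacle here; the only thing to be careful about is matching the Chernoff parameters to the constant $c$ used in the sampling probability so that the high-probability failure event for $|\R|$ being too large can be absorbed into the overall high-probability guarantee of Theorem~\ref{thm:large-LIS-2d} (alongside the complementary lower-tail event of Lemma~\ref{lem:logn-stitching}). A union bound over both events, each of polynomially small failure probability, suffices.
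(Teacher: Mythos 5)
Your proposal is correct and follows essentially the same route as the paper: both apply a Chernoff upper-tail bound to $|\R|$ (whose expectation is $\frac{cn\log n}{\tau}$) to conclude $|\R|=O(\frac{n\log n}{\tau})$ with high probability, and then multiply by the $O(n\log n)$ per-anchor cost from Lemma~\ref{lem:pi-entries}. The only cosmetic difference is that the paper phrases the total time as a random variable $T=O(n\log n)\cdot\sum_i X_i+O(n)$ before bounding $\sum_i X_i$, which is equivalent to your accounting.
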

	\begin{proof}
		For all $1\leq i\leq n$, let $X_i$ be the random variable 
		which is one if $p_i \in \R$, otherwise it is zero.
		Let $T$ be the random variable which is the time taken 
		to compute restricted-$\LIS$'s, for all  points in 
		$\R$. Then, by Lemma~\ref{lem:pi-entries}, we have
		\[T=O(n\log n)\cdot \sum_{i=1}^{n} X_i +O(n).\] 
		Now consider the random variable $X$ which is equal to 
		$\sum_{i=1}^{n} X_i$. Then ${\bf E}[X]={\bf E}[\sum_{i=1}^{n} X_i]=\frac{cn}{\tau}\log n$.
		Let $c'$ be a sufficiently large constant.
		By using an appropriate version of Chernoff bound, we observe that
		\[ {\bf Pr}\left(X> \frac{2cn}{\tau}\log n\right) \leq {\bf Pr}(X> 2{\bf E}[X]) < e^{-\Omega({\bf E}[X])} < \frac{1}{n^{c'}}, \]
		where the last inequality used the trivial fact that 
		${\bf E}[X]\geq c\log n$.
		Therefore, with high probability $T$ is bounded by $O\left(\frac{n^2}{\tau}\log^2n\right)$.
	\end{proof}
	
	We will perform an analogous procedure for computing 
	$\LIS(\mP \cap \query \cap SW(p_i))$, 
	for each $p_i \in \R$.
	
	\paragraph{Representing restricted-$\LIS$s 
		as rectangles.}
	Consider a point $p_i \in \R$. 
	For each point $p_j \in \mP \cap NE(p_i)$, 
	let $B_{ij}$ be an axis-parallel rectangle 
	with $p_i$ as the lower-left corner and
	$p_j$ as the top-right corner, with an 
	associated weight of $w(p_j)$. Let 
	$\B_i$ the collection of these rectangles.

	In a {\em rectangle range-max query}, the input is a
	set $\B_i$ of weighted axis-aligned rectangles in 2D. 
	Given a query rectangle $\query$, among all the rectangles
	in $\B_i$ which lie completely inside $\query$, 
	the goal is to report 
	the rectangle  with the largest weight. 
	We will use (vanilla) 4D range trees
	as our data structure~\cite{bcko08}.
	The data structure can be constructed in 
	$O(|\B_i|\log^3|\B_i|)$ 
	time and the rectangle range-max query can be answered in 
	$O(\log^4 |\B_i|)=O(\log^4 n)$ time. 
	The bounds hold with high probability. 

	\begin{lemma}
		The preprocessing time of the algorithm is 
		$O\left(\frac{n^2}{\tau}\log^4n\right)$.
		The bound holds with high probability.
	\end{lemma}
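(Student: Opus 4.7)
The plan is to account for the three preprocessing costs and observe that the 4D range tree construction dominates. First I would bound $|\R|$ with high probability: each of the $n$ points is sampled independently with probability $\frac{c\log n}{\tau}$, so ${\bf E}[|\R|]=\frac{cn\log n}{\tau}$, and a standard Chernoff bound gives $|\R|=O\left(\frac{n\log n}{\tau}\right)$ w.h.p. (this is essentially the same bound already invoked in the proof of Lemma~\ref{lem:all-pi-entries}).

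Next I would add up the three contributions to the preprocessing cost. Sampling $\R$ is trivially $O(n)$. Computing the restricted-$\LIS$ values $w(p_j)$ for every $p_i\in\R$ and every $p_j\in\mP\cap NE(p_i)$ (and symmetrically for $SW(p_i)$) takes $O\!\left(\tfrac{n^2}{\tau}\log^2 n\right)$ w.h.p.\ by Lemma~\ref{lem:all-pi-entries}. For each $p_i\in\R$, building a 4D vanilla range tree on the weighted rectangle family $\B_i$ takes $O(|\B_i|\log^3|\B_i|)=O(n\log^3 n)$ time, since $|\B_i|\le |\mP\cap NE(p_i)|\le n$. Summing over all $p_i\in\R$ gives
\[
|\R|\cdot O(n\log^3 n) \;=\; O\!\left(\tfrac{n\log n}{\tau}\right)\cdot O(n\log^3 n) \;=\; O\!\left(\tfrac{n^2}{\tau}\log^4 n\right),
\]
w.h.p., and the symmetric construction for $SW(p_i)$ contributes the same order. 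Taking the maximum of the three terms yields $O\!\left(\tfrac{n^2}{\tau}\log^4 n\right)$.

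Finally, I would argue that the bound holds with high probability by taking a union bound over the high-probability events used above (the upper bound on $|\R|$ and the running-time guarantee of Lemma~\ref{lem:all-pi-entries}); each of these fails with probability at most $n^{-c'}$ for a sufficiently large constant $c'$, so their intersection still holds w.h.p. There is no real obstacle here — the only mildly delicate point is ensuring that the same high-probability event that bounds $|\R|$ is reused consistently for both the restricted-$\LIS$ computation and the range-tree construction, rather than being invoked twice with independent randomness.
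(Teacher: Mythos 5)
Your proposal matches the paper's proof essentially step for step: both bound the restricted-$\LIS$ computation by Lemma~\ref{lem:all-pi-entries} and identify the dominant cost as building the range trees on the families $\B_i$, which is $|\R|\cdot O(n\log^3 n)=O\left(\frac{n^2}{\tau}\log^4 n\right)$ using the w.h.p.\ bound $|\R|=O\left(\frac{n\log n}{\tau}\right)$. Your accounting is, if anything, slightly more careful than the paper's (explicitly handling the union bound and the symmetric $SW$ construction), but the approach is the same.
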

	\begin{proof}
		By Lemma~\ref{lem:all-pi-entries}, construction
		of all the restricted-$\LIS$'s takes
		$O\left(\frac{n^2}{\tau}\log^2n\right)$ time.
		The preprocessing time is dominated by the 
		construction time of sets $\B_i$, for all 
		$p_i \in \R$, which is $\sum_{p_i \in \R}O(|\B_i|\log^3|\B_i|)
		=O\left(\frac{n}{\tau}\log n\right)\cdot O(n\log^3n)= 
		O\left(\frac{n^2}{\tau}\log^4n\right)$.
	\end{proof}
	
	\paragraph{Remark.} The bound of $O\left(\frac{n^2}{\tau}\log^4n\right)$ in the above lemma can be improved
	by one or two log factors by using more sophisticated data structures. However, as pointed out in the Introduction, 
	optimizing the log factors is not the focus of this paper.

	The following lemma establishes the connection 
	between Lemma~\ref{lem:ne-max} and set $\B_i$.
	\begin{lemma}\label{lem:ne-max-box}
		For a given query $\query \in \Q$, let $p_i \in \R
		\cap \query$. Then $\LIS(\mP \cap q \cap NE(p_i))$
		is equal to the weight of the 
		largest weighted rectangle in $\B_i$
		which lies completely inside $\query \cap NE(p_i)$.
	\end{lemma}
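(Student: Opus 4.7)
The plan is to establish a direct correspondence between rectangles of $\B_i$ that lie inside $\query\cap NE(p_i)$ and points $p_j\in\mP\cap\query\cap NE(p_i)$, and then simply invoke Lemma~\ref{lem:ne-max}. First I would unpack the construction: each rectangle $B_{ij}\in\B_i$ is axis-aligned with $p_i$ as its lower-left corner and some $p_j\in\mP\cap NE(p_i)$ as its upper-right corner, and it carries weight $w(p_j)$. So rectangles of $\B_i$ are indexed exactly by the points of $\mP\cap NE(p_i)$.

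Next I would exploit the fact that $\query$ is an axis-aligned rectangle: for such a query, an axis-aligned sub-rectangle $B_{ij}$ is contained in $\query$ if and only if both of its extreme corners lie in $\query$. Since $p_i\in\query$ by hypothesis, the containment $B_{ij}\subseteq\query$ reduces to the single condition $p_j\in\query$. Combined with $p_j\in NE(p_i)$, which is built into the definition of $\B_i$, we get that $B_{ij}\subseteq \query\cap NE(p_i)$ is equivalent to $p_j\in\mP\cap\query\cap NE(p_i)$. This gives a weight-preserving bijection between the two sides.

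Taking the maximum weight over the rectangles of $\B_i$ that sit inside $\query\cap NE(p_i)$ therefore equals $\max\{w(p_j)\mid p_j\in\mP\cap\query\cap NE(p_i)\}$, and Lemma~\ref{lem:ne-max} identifies this quantity with $\LIS(\mP\cap\query\cap NE(p_i))$, completing the proof.

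I do not anticipate any real obstacle: the argument is essentially a translation between the two equivalent representations (weighted points $p_j$ vs.\ weighted rectangles $B_{ij}$). The only subtle point worth stating explicitly is that reducing rectangle containment to corner containment requires both $\query$ and $B_{ij}$ to be axis-aligned, and that the ``lower-left corner'' direction of this reduction uses precisely the hypothesis $p_i\in\R\cap\query$.
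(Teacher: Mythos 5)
Your argument is correct and is essentially the paper's own proof: the paper's one-line justification is exactly the equivalence ``$p_j\in\mP\cap\query\cap NE(p_i)$ if and only if $B_{ij}$ lies completely inside $\query$,'' combined implicitly with Lemma~\ref{lem:ne-max}. You merely spell out the two ingredients the paper leaves tacit (corner containment for axis-aligned rectangles, and the use of $p_i\in\query$), which is fine.
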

	\begin{proof}
		The proof follows from the fact that a 
		point $p_j$ belongs to $\mP \cap q \cap NE(p_i)$ 
		if and only if rectangle $B_{ij}$ lies completely inside 
		$\query$. 
	\end{proof}

	\paragraph{Query algorithm.}
	Consider any range $\query \in \Q$. Scan $\R$ to 
	identify the points 
	which lie inside $\query$. If $|\R \cap \query|=\emptyset$, 
	then we do not proceed further for $\query$.
	Otherwise, for each element $p_i \in \R \cap \query$,
	we want to compute 
	\[\LIS(\mP \cap \query \cap NE(p_i)) 
	+ \LIS(\mP \cap \query \cap SW(p_i)) -1.\]
	To compute $\LIS(\mP \cap \query \cap NE(p_i))$, 
	we use Lemma~\ref{lem:ne-max-box} and pose a
	rectangle range-max query on $\B_i$ with 
	query $\query \cap NE(p_i)$. Analogously, 
	we compute $\LIS(\mP \cap \query \cap SW(p_i))$.
	Finally, report the largest value computed.
	By Lemma~\ref{lem:logn-stitching}, we claim that
	the answer is correct with high probability if 
	$\LIS(\mP \cap \query)\geq \tau$. Repeat this procedure for 
	each $\query \in \Q$. The query time 
	will be $O(|\Q||\R|\log^4n)$ which will be $O\left(\frac{mn}{\tau}\log^5n \right)$ with high probability.
	By appropriate bookkeeping, reporting the $\LIS$ of $\mP\cap \query$, for all $\query \in \Q$
	can be done in $O(k)$ time.
	
	\begin{lemma}
		The query time of the algorithm is $O\left(\frac{mn}{\tau}\log^5n +k \right)$. With high probability, 
		the correctness and the bound on the running 
		time holds.
	\end{lemma}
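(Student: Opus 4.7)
The statement has two parts: correctness with high probability (when $\LIS(\mP\cap\query)\geq \tau$) and a high-probability bound on query time. My plan is to handle these separately, leveraging Lemma~\ref{lem:logn-stitching} for correctness and a Chernoff-style bound on $|\R|$ for the running time.

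For correctness, I will fix a query $\query$ with $\LIS(\mP\cap\query)\geq \tau$. By Lemma~\ref{lem:logn-stitching}, with high probability there is at least one stitching element $p^* \in \R \cap \sq$, so the query algorithm's scan of $\R$ will encounter it. For this $p^*$, equation~\ref{eq:decomposable-main} yields
\[
\LIS(\mP\cap\query) = \LIS(\mP\cap\query\cap NE(p^*)) + \LIS(\mP\cap\query\cap SW(p^*)) - 1,
\]
and by Lemma~\ref{lem:ne-max-box} each of the two terms on the right is computed exactly by the rectangle range-max query on $\B_{i^*}$ with region $\query\cap NE(p^*)$ (and symmetrically on the $SW$ side). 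Thus the value reported for $p^*$ equals $\LIS(\mP\cap\query)$. For any other $p_i\in\R\cap\query$, the analogous sum is a valid lower bound on $\LIS(\mP\cap\query)$ since it corresponds to an actual increasing subsequence inside $\query$ that passes through $p_i$. Therefore the maximum over $\R\cap\query$ is exactly $\LIS(\mP\cap\query)$, and a union bound over the $m$ queries keeps the total failure probability polynomially small.

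For the running time, I will first bound $|\R|$. Since each $p_i$ is sampled independently with probability $\frac{c\log n}{\tau}$, we have $\mathbf{E}[|\R|]=\frac{cn\log n}{\tau}$, and by a standard upper-tail Chernoff bound (essentially the one used in the proof of Lemma~\ref{lem:all-pi-entries}), $|\R|=O\!\left(\frac{n\log n}{\tau}\right)$ with high probability. Per query, the algorithm scans $\R$ in $O(|\R|)$ time and, for each $p_i\in\R\cap\query$, issues two rectangle range-max queries on the 4D range trees of $\B_i$, each answered in $O(\log^4 n)$ time. Hence the time per query is $O(|\R|\log^4 n)=O\!\left(\frac{n\log^5 n}{\tau}\right)$, and summing over $m$ queries gives $O\!\left(\frac{mn}{\tau}\log^5 n\right)$. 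Reporting the actual subsequence achieving the maximum is charged to $O(k)$ using standard bookkeeping: store, with each rectangle in $\B_i$, a pointer to the subsequence that realized its weight so the chain can be traversed in time proportional to its length.

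The main obstacle I expect is in the correctness argument, specifically ensuring that the \emph{particular} stitching element guaranteed by Lemma~\ref{lem:logn-stitching} is one for which the decomposition identity in equation~\ref{eq:decomposable-main} gives the full $\LIS$. This requires being careful that ``stitching element'' is defined relative to some fixed $\LIS$ of $\mP\cap\query$ and that any element of that $\LIS$ decomposes the problem cleanly; once this is verified, the rest reduces to the already-established correctness of the restricted-$\LIS$ computation (Lemma~\ref{lem:ne-max}) and its rectangle-range-max reformulation (Lemma~\ref{lem:ne-max-box}). The running-time analysis is then routine given the Chernoff bound on $|\R|$ and the query complexity of the 4D range tree.
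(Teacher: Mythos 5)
Your proposal is correct and follows essentially the same route as the paper: the paper's own justification (given in the ``Query algorithm'' paragraph preceding the lemma) likewise invokes Lemma~\ref{lem:logn-stitching} for correctness, Lemma~\ref{lem:ne-max-box} for the exactness of the two rectangle range-max queries per sampled point, and the high-probability bound $|\R|=O\bigl(\frac{n\log n}{\tau}\bigr)$ to get $O(|\Q|\cdot|\R|\cdot\log^4 n)=O\bigl(\frac{mn}{\tau}\log^5 n\bigr)$ plus $O(k)$ for reporting. Your added observations --- that non-stitching sampled points only yield valid lower bounds, and the union bound over the $m$ queries --- are exactly the details the paper leaves implicit.
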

	
	This finishes the proof of Theorem~\ref{thm:large-LIS-2d}.

	\subsection{Colored 2D range $\LIS$ problem}
	In the $\CRLIS$ problem, 
	for each $\query \in \Q$, let  
	$c_{\query} \in \C$ be the color for which $\LIS(\mP_{c_{\query}} \cap \query)$ is maximized. The algorithm 
	for $\CRLIS$ requires two modifications 
	to the algorithm for $\TRLIS$. The first
	modification is the precise connection 
	between $\R$ and the stitching elements.
	
	\begin{lemma}\label{lem:logn-stitching-col}
		For all ranges $\query$ such that 
		$\LIS(\mP_{c_{\query}} \cap q) \geq \tau$, 
		if $\sq$ is one of the $\LIS$ of $\mP_{c_{\query}} \cap \query$, then 
		with high probability
		$|\R \cap \sq|=\Omega(\log n)$.
	\end{lemma}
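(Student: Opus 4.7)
The plan is to mirror the proof of Lemma~\ref{lem:logn-stitching} essentially verbatim. The key observation is that the random sampling defining $\R$ is oblivious to colors: each $p_i \in \mP$ is included in $\R$ independently with probability $\frac{c \log n}{\tau}$. Consequently, the probability that $\R$ hits a prescribed subset of $\mP$ depends only on the cardinality of that subset, not on which color(s) its elements carry. Hence the probabilistic argument used for the uncolored version should port over without modification.

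Concretely, I would first fix any $\query \in \Q$ with $\LIS(\mP_{c_{\query}} \cap \query) \geq \tau$ and let $\sq$ be a fixed $\LIS$ of $\mP_{c_{\query}} \cap \query$; by hypothesis, $|\sq| \geq \tau$. For each $e \in \sq$, let $X_e$ be the indicator of the event $e \in \R$, and set $X = |\R \cap \sq| = \sum_{e \in \sq} X_e$. Linearity of expectation gives ${\bf E}[X] \geq \tau \cdot \frac{c \log n}{\tau} = c \log n$. Next, applying the same form of the Chernoff bound used previously with $\delta = 1/2$ yields
\[
{\bf Pr}\!\left(X < \frac{c \log n}{2}\right) \leq e^{-{\bf E}[X]/8} \leq \frac{1}{n^{c'}},
\]
for a sufficiently large constant $c'$ (depending on $c$). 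A union bound over the polynomially many queries in $\Q$ then produces the stated high-probability guarantee.

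No genuine obstacle arises: the analysis uses only that $|\sq| \geq \tau$, so the monochromaticity of $\sq$ never enters the probabilistic calculation. The only conceptual shift from the uncolored case is that the long subsequence whose hitting we analyze is now a monochromatic $\LIS$ (whose length is guaranteed by the hypothesis $\LIS(\mP_{c_{\query}} \cap \query) \geq \tau$) rather than a global $\LIS$ of $\mP \cap \query$; the downstream Chernoff-plus-union-bound pipeline is unchanged.
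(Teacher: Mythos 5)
Your proposal is correct and matches the paper's proof, which simply observes that the argument for Lemma~\ref{lem:logn-stitching} carries over verbatim upon replacing $\mP$ with $\mP_{c_{\query}}$, since the sampling is color-oblivious and the Chernoff-plus-union-bound calculation only uses $|\sq|\geq\tau$.
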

	\begin{proof}
		The proof follows directly from the proof of Lemma~\ref{lem:logn-stitching} by replacing $\mP$ with $\mP_{c_{\query}}$.
	\end{proof}
	Let $p_i\in \R$ have color $c$.
	Next, we modify Equation~\ref{eq:restricted-lis}
	by replacing $\mP$ with $\mP_c$. Specifically, we do 
	the following:
	\begin{equation}\label{eq:restricted-lis-col}
		w(p_j) \leftarrow 1 + \max_{p_{\alpha}=(\alpha,a_{\alpha}) \in (\mP_c \cap NE(p_i))}
		\{ w(p_{\alpha}) \mid \alpha < j \text{ and } a_{\alpha} < a_j \}.
	\end{equation}
	
	The remaining preprocessing steps and the 
	query algorithm of $\TRLIS$ can be trivially adapted
	for the $\CRLIS$ problem. 
	The final result is summarized below.
	\begin{theorem}\label{thm:large-LIS-col-2d}  
		There is an algorithm for the $\CRLIS$ problem
		with running time $O\left(\frac{mn}{\tau}\log^5n + 
		\frac{n^2}{\tau}\log^4n + k \right)$,
		where $k$ is the cumulative length of all the 
		output subsequences. The bound on the running 
		time holds with high probability. 
		For all queries $\query \in \Q$ with  
		$\LIS(\mP_{c_{\query}} \cap q) \geq \tau$, 
		with high probability, the algorithm returns the correct solution.
	\end{theorem}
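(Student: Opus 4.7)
The plan is to adapt the uncolored algorithm of Theorem~\ref{thm:large-LIS-2d} essentially verbatim, replacing each reference to $\mP$ by $\mP_c$ whenever a sampled pivot has color $c$. The underlying observation that makes this work is that the decomposition identity, Equation~\ref{eq:decomposable-main}, carries over to the colored setting: if $p_i$ has color $c$ and lies on a monochromatic $\LIS$ of color $c$ inside $\query$, then
\[
\LIS(\mP_c \cap \query) \;=\; \LIS(\mP_c \cap \query \cap NE(p_i)) \;+\; \LIS(\mP_c \cap \query \cap SW(p_i)) \;-\; 1,
\]
since every element of this monochromatic $\LIS$ has color $c$ and the cut through $p_i$ splits it into two monochromatic subsequences lying in $NE(p_i)$ and $SW(p_i)$ respectively.

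First I would sample $\R \subseteq \mP$ by keeping each point independently with probability $\frac{c\log n}{\tau}$ and invoke Lemma~\ref{lem:logn-stitching-col} to conclude that for every $\query$ with $\LIS(\mP_{c_\query} \cap \query) \geq \tau$, the set $\R \cap \query$ contains with high probability some point of color $c_\query$ lying on an optimal monochromatic $\LIS$. For each $p_i \in \R$ of color $c$, I would compute the restricted monochromatic weights via Equation~\ref{eq:restricted-lis-col} by running a standard $O(n\log n)$-time $\LIS$ algorithm on $\mP_c \cap NE(p_i)$ (and symmetrically on $\mP_c \cap SW(p_i)$). The Chernoff argument of Lemma~\ref{lem:all-pi-entries} applies unchanged to $|\R|$, yielding $O(\frac{n^2}{\tau}\log^2 n)$ total time for all restricted $\LIS$ computations with high probability. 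I would then turn each $(p_i, p_j)$ pair into a weighted axis-aligned rectangle as in the uncolored case and build, for every $p_i \in \R$, a 4D vanilla range-max data structure $\B_i$ in $O(n\log^3 n)$ time, giving total preprocessing $O(\frac{n^2}{\tau}\log^4 n)$.

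At query time, for each $p_i \in \R \cap \query$ with color $c(p_i)$, I would query $\B_i$ with $\query \cap NE(p_i)$ and its south-west counterpart in $O(\log^4 n)$ time to obtain the two restricted monochromatic $\LIS$ lengths, sum them and subtract one, and return the maximum over $p_i \in \R \cap \query$ together with a traceback of the realizing subsequence; the $O(k)$ reporting cost and the $O(\frac{mn}{\tau}\log^5 n)$ bound on $\sum_{\query}|\R\cap\query|$ follow from the same Chernoff-style argument used before. The step I expect to require the most care is verifying that this maximum actually equals $\LIS(\mP_{c_\query} \cap \query)$: the upper bound is immediate because each term is at most $\LIS(\mP_{c(p_i)} \cap \query) \leq \LIS(\mP_{c_\query} \cap \query)$, and the matching lower bound uses Lemma~\ref{lem:logn-stitching-col} to guarantee that some $p_i \in \R \cap \query$ of color $c_\query$ lies on an optimal monochromatic $\LIS$, at which point the displayed decomposition identity is attained exactly.
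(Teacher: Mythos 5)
Your proposal is correct and follows essentially the same route as the paper: sample $\R$ from all of $\mP$, invoke Lemma~\ref{lem:logn-stitching-col} to hit an optimal monochromatic $\LIS$, compute the restricted weights color-by-color via Equation~\ref{eq:restricted-lis-col}, and reuse the rectangle range-max machinery and runtime analysis of Theorem~\ref{thm:large-LIS-2d} unchanged. The paper states these two modifications and declares the rest a trivial adaptation; your write-up simply fills in the same details, including the correct two-sided argument for why the maximum over $p_i \in \R \cap \query$ equals $\LIS(\mP_{c_{\query}} \cap \query)$.
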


	\section{Third technique: Handling small cardinality
		colors}\label{sec:third-tec}
	In this section we will discuss a technique 
	to handle colored $\RLIS$ problems. The algorithm
	is efficient when each color class has a small cardinality.
	We will prove the following two results.
	
	\begin{theorem}\label{thm:third-tec-2D}
		Fix a parameter $\Delta$.
		Then there is a deterministic algorithm to answer
		$\CRLIS$ problem in $O(m\log^4n + n\Delta\log^4n + k)$ time,
		where for each color $c \in \C$, we have $|\mP_c|\leq \Delta$, and $k$ is the cumulative length of the $m$
		output subsequences.
	\end{theorem}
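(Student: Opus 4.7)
}
The plan is to implement the reduction sketched in the overview of the third technique: replace the colored reporting problem by an uncolored rectangle range-max problem on a polynomially small (in $\Delta$) collection of weighted rectangles, and then invoke a standard $4$D range tree. The rectangles are defined as in the overview: for each color $c\in\C$ and each ordered pair of points $p_i,p_j\in\mP_c$ with $i<j$ and $a_i<a_j$, we create a weighted axis-aligned rectangle $R_c(i,j)$ whose lower-left corner is $p_i$ and whose upper-right corner is $p_j$, with weight $w(R_c(i,j))=\LIS(\mP_c\cap R_c(i,j))$. Let $\R=\bigcup_{c}\{R_c(i,j)\}$. Since the number of such pairs for color $c$ is at most $|\mP_c|^{2}$, we get $|\R|\leq\sum_c |\mP_c|^2\leq \Delta\sum_c|\mP_c|=O(n\Delta)$.

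The first step is to compute all weights efficiently. Fix a color $c$. For each $p_i\in\mP_c$ (playing the role of the lower-left corner), run a single standard $\LIS$-style dynamic program over the points of $\mP_c\cap NE(p_i)$ in increasing order of $x$-coordinate: maintain, for each $p_j\in\mP_c\cap NE(p_i)$, the value $f(j)=\LIS(\mP_c\cap R_c(i,j))$ via the recurrence $f(j)=1+\max\{f(k)\,:\,p_k\in\mP_c\cap R_c(i,j)\setminus\{p_j\},\ k<j,\ a_k<a_j\}$, using a balanced BST keyed on $a$-values to get $O(|\mP_c|\log|\mP_c|)$ per starting point $p_i$. Also store a back-pointer from $j$ to the argmax $k$ so that each maximizing subsequence can later be reported in time proportional to its length. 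Summing gives $O(|\mP_c|^2\log|\mP_c|)$ per color, and $\sum_c O(|\mP_c|^2\log|\mP_c|)\leq O(n\Delta\log n)$ overall.

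The second step is to answer queries. Build a vanilla $4$D range tree~\cite{bcko08} on $\R$, treating each rectangle $[x_1,x_2]\times[y_1,y_2]$ as a $4$D point $(x_1,x_2,y_1,y_2)$ with weight $w(R_c(i,j))$; given a query $q=[X_1,X_2]\times[Y_1,Y_2]$, containment of a rectangle in $q$ is the $4$D dominance condition $X_1\leq x_1,\ x_2\leq X_2,\ Y_1\leq y_1,\ y_2\leq Y_2$. Construction takes $O(|\R|\log^{3}|\R|)=O(n\Delta\log^3 n)$ and each query takes $O(\log^{4}n)$, so the total query time over $\Q$ is $O(m\log^{4}n)$; following back-pointers outputs the realizing subsequences in $O(k)$ additional time.

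The main thing to verify carefully is the correctness of the reduction, i.e.\ that the maximum-weight rectangle of $\R$ contained in $q$ has weight exactly $\max_{c}\LIS(\mP_c\cap q)$. The easy direction: for any $R_c(i,j)\subseteq q$, its weight $\LIS(\mP_c\cap R_c(i,j))$ is $\leq\LIS(\mP_c\cap q)\leq\max_c\LIS(\mP_c\cap q)$, since $\mP_c\cap R_c(i,j)\subseteq\mP_c\cap q$. For the other direction, let $c^{\star}$ be the color achieving the maximum and let $S$ be some color-$c^{\star}$ $\LIS$ of $\mP_{c^{\star}}\cap q$, with first element $p_{i^{\star}}$ and last element $p_{j^{\star}}$. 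Because $q$ is axis-aligned and both endpoints lie in $q$, the entire rectangle $R_{c^{\star}}(i^{\star},j^{\star})$ is contained in $q$, and $S\subseteq\mP_{c^{\star}}\cap R_{c^{\star}}(i^{\star},j^{\star})$, so $w(R_{c^{\star}}(i^{\star},j^{\star}))\geq|S|=\max_c\LIS(\mP_c\cap q)$. This is really the only non-mechanical step; the rest is bookkeeping. Adding the three costs gives the bound $O(m\log^{4}n+n\Delta\log^{4}n+k)$ stated in the theorem (absorbing the slightly smaller $\log^{3}n$ factor in the preprocessing term into the stated $\log^{4}n$).
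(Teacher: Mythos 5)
Your proposal is correct and follows essentially the same route as the paper's proof: the same $O(n\Delta)$ collection of weighted rectangles $R_c(i,j)$ (one per ordered pair of same-colored points), the same per-starting-point dynamic program to compute the weights $\LIS(\mP_c\cap R_c(i,j))$, and the same reduction to an uncolored rectangle range-max query answered via a 4D range tree in $O(\log^4 n)$ per query. The only immaterial differences are that you compute the weights with a one-dimensional balanced BST in $O(|\mP_c|\log|\mP_c|)$ per starting point where the paper uses a dynamic 2D range tree in $O(|\mP_c|\log^2|\mP_c|)$, and that you spell out the two-directional correctness of the reduction (that the max-weight contained rectangle realizes $\max_c \LIS(\mP_c\cap \query)$), which the paper states without proof.
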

	
	\begin{theorem}\label{thm:third-tec-1D}
		Fix a parameter $\Delta$.
		Then there is a deterministic algorithm to answer
		$\CORLIS$ problem in $O(m\log n + n\Delta\log^2 n + k)$ time,
		where for each color $c \in \C$, we have $|\mP_c|\leq \Delta$, and $k$ is the cumulative length of the $m$
		output subsequences.
	\end{theorem}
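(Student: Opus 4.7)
The plan is to apply the third technique outlined in the overview, specialized to one-dimensional index ranges. For each color $c\in\C$ and each ordered pair of points $p_i,p_j\in\mP_c$ with $i<j$ and $a_i<a_j$, I will create a candidate \emph{bounding interval} $[i,j]$ weighted by $w_c(i,j)$, defined as the length of the longest increasing subsequence drawn from $\mP_c$ whose first element is $p_i$ and whose last element is $p_j$. For any query $\query=[x_1,x_2]$ and any color $c$, the optimal monochromatic increasing subsequence contained in $\query$ must begin and end at two points of the same color whose indices lie in $[x_1,x_2]$; hence the answer to $\query$ is realized by the maximum-weight candidate interval $[i,j]\subseteq[x_1,x_2]$. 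Since each color contributes at most $|\mP_c|^2\le\Delta\cdot|\mP_c|$ such intervals, the total count is $\sum_{c}\Delta\cdot|\mP_c|=O(n\Delta)$, reducing the colored problem to a single uncolored interval range-max instance.

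To populate the weights I would iterate, for each color $c$, over every potential starting point $p_i\in\mP_c$ and run a restricted patience sort on $\{p_k\in\mP_c : k\ge i,\ a_k\ge a_i\}$, recording for each endpoint $p_j$ the value $w_c(i,j)$ together with a back-pointer to the preceding element of a witness subsequence. This costs $O(|\mP_c|\log|\mP_c|)$ per choice of $p_i$ and $O(|\mP_c|^2\log n)$ per color, which sums to $O(n\Delta\log n)$ across all colors. The back-pointers ensure that, given any candidate triple $(c,i,j)$, the corresponding increasing subsequence can be reconstructed in time proportional to its length, which is what is needed to realize the $O(k)$ reporting term rather than merely the length version.

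With the $O(n\Delta)$ weighted intervals in hand, I would build a one-dimensional interval-containment max-structure by mapping each interval $[i,j]$ of weight $w$ to a weighted planar point $(i,j)$ and answering each query $[x_1,x_2]$ as a 2D range-max over the orthant $[x_1,\infty)\times(-\infty,x_2]$; via the reduction to 2D orthogonal point location cited earlier this admits $O(n\Delta\log^2 n)$ construction and $O(\log n)$ query time. The query algorithm then spends $O(\log n)$ per $\query\in\Q$ to locate the maximum-weight contained interval, followed by back-pointer reconstruction in $O(k_q)$ time; summing gives the claimed $O(m\log n + n\Delta\log^2 n + k)$ bound. The main subtlety, and the step that merits the most care, is arranging the weight computation so that each of the $O(n\Delta)$ values $w_c(i,j)$ is produced within the stated budget while simultaneously recording enough side information to reconstruct any chosen witness subsequence in linear time on demand, since multiple queries may return the same $(c,i,j)$ and the reporting cost is charged per output rather than once per candidate.
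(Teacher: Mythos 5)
Your proposal matches the paper's proof of Theorem~\ref{thm:third-tec-1D} in all essentials: the same $O(n\Delta)$ family of weighted per-color intervals $[i,j]$ whose weight is the longest monochromatic chain from $p_i$ to $p_j$, the same per-starting-point dynamic program to fill the weights (you use a restricted patience sort where the paper sweeps with a 2D range tree; both fit the stated budget), and the same reduction of each query to an uncolored interval-containment range-max query answered in $O(\log n)$ time with back-pointer reconstruction supplying the $O(k)$ term. The only nit is that restricting to pairs with $i<j$ omits the degenerate candidates $[i,i]$ of weight one, which you should include so that queries whose best monochromatic answer is a single element are still answered correctly.
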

	
	We will first consider the $\CRLIS$ problem.
	Fix a color $c$. 
	For a query $\query$, consider the $\LIS$ of 
	points of $\mP_c \cap q$. 
	Let  $p_i \in \mP_c$ (resp., $p_j \in \mP_c$) be the 
	first (resp., last) point in the $\LIS$. Call this a pair $(i,j)$.
	As such, the number of distinct pairs
	of color $c$ will be $O(|\mP_c|^2)$. Adding up 
	over all the colors, the total 
	number of distinct pairs will be:
	\[\sum_{c}O(|\mP_c|^2) \leq O\left(\Delta\sum_{c}|\mP_c|\right)=O(n\Delta).\]
	For each color $c$ and for each of its pair $(i,j)$, 
	we will precompute and store the length of the $\LIS$
	starting with $p_i$ and ending with $p_j$.
	See Figure~\ref{fig:all-pairs} for an example.
	
	\paragraph{Constructing the set $\R$.} 
	Consider a color $c$ and a point $p_i=(i,a_i) \in \mP_c$.
	Define $NE(p_i)$ to be set of points in 2D which lie
	in the {\em north-east} region of $p_i$, i.e.,
	$NE(p_i)=\{(p_x,p_y) \mid p_x > i \text{ and } p_y > a_i\}$.
	Consider the points $\mP_c \cap NE(p_i)$ in increasing
	order of their $x$-coordinate value. We will assign a
	weight $w(p_j)$ for each point encountered. 
	As a base case, we will assign $w(p_i) \leftarrow 1$.
	At a general step, if we encounter point $p_j$, then we 
	\begin{equation}\label{eq:all-pairs}
		w(p_j) \leftarrow 1 + \max_{p_{\alpha}=(\alpha,a_{\alpha}) \in (\mP_c \cap NE(p_i))}
		\{ w(p_{\alpha}) \mid \alpha < j \text{ and } a_{\alpha} < a_j \}.
	\end{equation}
	See Figure~\ref{fig:all-pairs} where
	$p_j$ is assigned a weight of four.
	Construct an axis-aligned rectangle 
	$R_c(i,j)$ with $p_i$ (resp., $p_j$) as the
	bottom-left (resp., top-right) corner. 
	A weight of $w(p_j)$ is associated with 
	$R_c(i,j)$. The intuition is that  $w(p_j)$ is equal to
	the $\LIS$ of the points of color $c$ lying inside 
	$R_c(i,j)$, i.e., $\LIS(\mP_c \cap R_c(i,j))$.
	
	\begin{figure}
		\centering
		\includegraphics{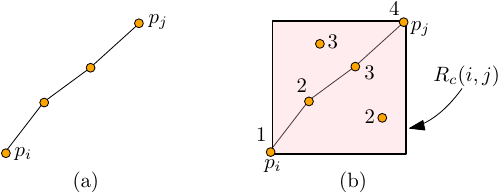}
		\caption{(a) An increasing subsequence 
			of length four with all four points
			being of color $c$. (b) The weight computed by Equation~\ref{eq:all-pairs} is shown 
			besides each point. The $\LIS$ of color $c$ inside
			$R_c(i,j)$ is four (the other two points of color 
			$c$ do not participate in the $\LIS$). Therefore,
			the weight associated with $R_c(i,j)$ is four. }
		\label{fig:all-pairs}
	\end{figure}
	The computation of 
	$w(p_j)$, for all $p_j \in \mP_c \cap NE(p_i)$
	can be done in $O(|\mP_c|\log^2 |\mP_c|)$ time.
	Construct a vanilla 2D range-tree~\cite{bcko08} based 
	on the points $\mP_c \cap NE(p_i)$.
	Set the weight of $p_i$ equal to one and 
	the weight of remaining points to $-\infty$.
	Whenever a point $p_j$ is encountered, then 
	query the range tree with query rectangle 
	$q'=(-\infty, j) \times
	(-\infty, a_j)$ and report the point in 
	$\mP_c \cap NE(p_i)\cap q'$
	with the maximum weight. The query time 
	is $O(\log^2|\mP_c|)$. If $p_{\alpha}$ is the 
	point reported, update the range 
	tree with the new weight of 
	$w(p_j) \leftarrow 1+w(p_{\alpha})$.
	The update takes $O(\log^2|\mP_c|)$ amortized time.

	Repeat this procedure for each point in $\mP_c$.
	This takes $O(|\mP_c|^2\log^2 |\mP_c|)$ time.
	Define $\R_c \leftarrow \bigcup_{(i,j)}R_c(i,j)$
	and $\R\leftarrow \bigcup_{c} \R_c$ which is a collection of rectangles corresponding to the distinct 
	pairs $(i,j)$. The total time taken to 
	construct $\R$ will be $\sum_{c}O(|\mP_c|^2\log^2 |\mP_c|)
	=O(n\Delta\log^2n)$.
	
	\paragraph{Rectangle range-max structure.}
	In a {\em rectangle range-max query}, the input is a
	set $\R$ of weighted axis-aligned rectangles in 2D. 
	Given a query rectangle $\query$, among all the rectangles
	in $\R$ which lie completely inside $\query$, 
	the goal is to report 
	the rectangle  with the largest weight. 
	Note that in the rectangle range-max query, the 
	color of the rectangles does {\em not} matter.
	We are dealing with uncolored rectangles 
	and as a result, we can use vanilla 4D range trees
	as our data structure~\cite{bcko08}.
	The data structure can be constructed in 
	$O(|\R|\log^4|\R|)=O(n\Delta\log^4n)$ 
	time and the rectangle range-max query can be answered in 
	$O(\log^4 |\R|)=O(\log^4 n)$ time.
	
	Coming back to our $\CRLIS$ problem, for a query range $\query \in \Q$, 
	we first query the rectangle range-max data structure.
	Let $c_{\query}$ be the color
	corresponding to the reported rectangle. We claim that
	$c_{\query}$ is the color 
	for which  $\LIS(\mP_{c_{\query}}\cap \query)$ 
	is maximized. Overall, answering $m$ range queries
	in $\Q$ requires only $O(m\log^4n)$ time.
	By appropriate bookkeeping, reporting the $\LIS$ of $\mP_{c_{\query}}\cap \query$, for all $\query \in \Q$
	can be done in $O(k)$ time. This finishes the
	proof of Theorem~\ref{thm:third-tec-2D}.

	\paragraph{Shaving log factors in 1D.}
	To answer $\CORLIS$ the rectangle range-max
	data structure can be replaced by 
	{\em interval range-max} data structure.
	For each rectangle $R\in \R$, let $I$ be the 
	projection of $R$ onto the $x$-axis. The 
	weight of $I$ is equal to the weight of $R$,
	for all $R\in \R$. Let $\I \leftarrow \bigcup_{R\in \R} I$
	be the collection of weighted intervals.
	In an interval range-max query, the input is a
	set $\I$ of weighted intervals on the real line. 
	Given a query range $\query$, among all the intervals
	in $\I$ which lie completely inside $\query$, 
	the goal is to report 
	the interval  with the largest weight. 
	
	The interval range max data structure can be 
	constructed in $O(|\I|\log |\I|)=O(n\Delta\log n)$ time 
	(via reduction to {\em 2D orthogonal point location}
	problem~\cite{st86}) and the query can be 
	answered in $O(\log |\I|)=O(\log n)$ time. 
	This finishes the proof of Theorem~\ref{thm:third-tec-1D}.

	\section{Putting all the techniques together}
	\label{sec:final-results}
	
	Finally, in this section we will put together
	all the three techniques to obtain our upper bound
	results.

	\subsection{2D Range $\LIS$ problem}
	The algorithm for $\TRLIS$ 
	applies the first technique on $\mP$ and $\Q$, 
	and then applies the second technique on 
	$\mP$ and $\Q$. For each query $\query \in \Q$,
	with high probability, the correct solution 
	is returned by one of them. 
	Using Theorem~\ref{thm:2d-range-small-lis}, the running time of the first technique 
	is $O(m\tau\log^3 n + n\tau\log^4n +k)$.
	Using Theorem~\ref{thm:large-LIS-2d}, 
	the running time of the second technique
	is $O\left(\frac{mn}{\tau}\log^5n + 
	\frac{n^2}{\tau}\log^4n + k \right)$. 
	Setting $\tau \leftarrow \sqrt{n}$, we 
	obtain a total running time of 
	$O(m\sqrt{n}\log^5n + n\sqrt{n}\log^4 + k)$.
	This proves Theorem~\ref{thm:TRLIS}.
	
	\subsection{Colored 
		1D range $\LIS$ problem}
	The algorithm consists of the following steps.
	\paragraph{Light and heavy colors.}
	Define a parameter $\Delta$ which is set to $\sqrt{n}$.
	A color $c$ is classified as {\em light} if 
	$|\mP_c|\leq \Delta$, otherwise, it is classified
	as {\em heavy}. We will design different algorithms
	to handle light colors and heavy colors.
	
	\paragraph{Querying light colors 
		simultaneously.}
	Let $\mP_{\ell} \subseteq \mP$ be the set of points
	having light color. We will use the third technique
	to answer $\CORLIS$ on $\mP_{\ell}$.
	By Theorem~\ref{thm:third-tec-1D}, the running time
	is $O(m\log n + n\sqrt{n}\log^2n + k)$.

	\paragraph{Query heavy colors independently.}
	For each heavy color, we build Tiskin's structure
	and the structure of Gawrychowski, Gorbachev, and 
	Kociumaka~\cite{GGK24}
	to answer the length version and the reporting 
	version of $\ORLIS$, respectively. Given a 
	query $\query \in \Q$, we query the length structure 
	of each heavy color and find the color $c_q$ for which 
	$\LIS(\mP_{c_{\query}}\cap \query)$ 
	is maximized. Finally, query the reporting structure
	of color $c_q$ to report the $\LIS$ of 
	$\mP_{c_{\query}}\cap \query$. 
	
	The preprocessing time (dominated by the 
	construction of reporting structures)
	is $O(n\log^3n)$.
	Since the number of 
	heavy colors is $O(\sqrt{n})$, querying the
	length structure takes  $O(\sqrt{n}\log n)$ time per 
	query, and querying the reporting structure takes
	$O(1+k_q)$ time per query, where $k_q$ is
	the length of the $\LIS$ reported. As such, the 
	time taken to answer all the $m$ queries will be 
	$O(m\sqrt{n}\log n + k)$.
	
	\paragraph{Overall algorithm.} For each 
	query $\query \in \Q$, after querying 
	the light colors and the heavy colors, 
	report the larger   among the two
	sequences reported.
	The overall running time of the algorithm is 
	$O(m\sqrt{n}\log n + n\sqrt{n}\log^2n + k)$.
	This proves Theorem~\ref{thm:CORLIS}.
	
	\subsection{Colored 2D range $\LIS$ problem}
	The algorithm for $\CRLIS$ is slightly more
	nuanced than $\CORLIS$.
	
	\paragraph{Light and heavy colors.}
	Define a parameter $\Delta$ which will be set later.
	As before, a color $c$ is classified as {\em light} if 
	$|\mP_c|\leq \Delta$, otherwise, it is classified
	as {\em heavy}. We will design different algorithms
	to handle light colors and heavy colors.

	\paragraph{Handling light colors.}
	Let $\mP_{\ell} \subseteq \mP$ 
	be the set of points which belong to a light color.
	We will use the third technique on $\mP_{\ell}$
	and queries $\Q$.
	Using Theorem~\ref{thm:third-tec-2D}, it follows that 
	the running time is $O(m\log^4n + n\Delta\log^4n + k)$.

	\paragraph{Handling heavy colors and small $\LIS$ 
		queries.}
	Let $\mP_{h} \subseteq \mP$ 
	be the set of points which belong to a heavy color.
	We will use the first technique in Section~\ref{sec:first-tec} on $\mP_h$ and queries $\Q$ (Theorem~\ref{thm:col-2d-small-lis}). The running time is
	$O\left(\frac{m\tau n}{\Delta}\log^3n + \frac{n^2\tau}{\Delta}\log^4n + k \right)$.
	
	\paragraph{Handling heavy colors and large $\LIS$ 
		queries.}
	We will use the second technique in Section~\ref{sec:second-tec} on $\mP_h$ and queries $\Q$ 
	(Theorem~\ref{thm:large-LIS-col-2d}). The running time 
	is $O\left(\frac{mn}{\tau}\log^5n + 
	\frac{n^2}{\tau}\log^4n + k \right)$.
	
	\paragraph{Overall algorithm.}
	For each 
	query $\query \in \Q$, after querying 
	each of the above three subroutines, 
	report the largest   among the three
	sequences reported.
	Combining all the three subroutines, the total running
	time will be:
	\[ \underbrace{O\left(m\log^4n+\frac{mn}{\tau}\log^5n + \frac{m\tau n}{\Delta}\log^3n\right)}_{\text{query time}} 
	+ \underbrace{O\left( \frac{n^2}{\tau}\log^4n + n\Delta 
		\log^4n + \frac{n^2\tau}{\Delta} \log^4n \right)}_{\text{preprocessing time}} + O\left(k \right)\]
	We set the parameters $\tau \leftarrow n^{1/3}$ 
	and $\Delta \leftarrow n^{2/3}$ in the above expression 
	to obtain a running time 
	of $O(mn^{2/3}\log^5n+ n^{5/3}\log^4n +k)$.
	This proves Theorem~\ref{thm:CRLIS}.
	
	\section{Conditional Lower Bound}\label{sec:lower}

	In this section we prove Theorem~\ref{thm:colorLB}. Before we do so, we first recall the Combinatorial Boolean Matrix Multiplication Hypothesis (\CBMMH) and a conditional lower bound of \cite{chan2014linear} on computing mode for range queries.

	\begin{definition}[Combinatorial Boolean Matrix Multiplication conjecture]\label{def:CBMMH}
		The Combinatorial Boolean Matrix Multiplication conjecture asserts that for every $\varepsilon>0$, no combinatorial algorithm running in time $n^{1.5-\varepsilon}$ can given as input two  $\sqrt{n}\times \sqrt{n}$ Boolean matrices, compute their product. 
	\end{definition}
	
	\paragraph{Computing Mode for Range Queries.} A mode of a multiset $S$ is an element $a \in S$ of maximum multiplicity; that
	is, $a$ occurs at least as frequently as any other element in $S$. Given a sequence $\mT$
	of $n$ elements and set of $m$ range queries $\Q$, for each query $q\in \Q$, the goal is to answer the mode of $\mT\cap q$.

	\begin{theorem}[Chan et al.\ \cite{chan2014linear}] \label{thm:chan}
		Suppose there is an algorithm that takes as input a sequence  
		of $n$ elements and set of $m$ range queries, runs in time $O(n^{c}\cdot (m+n))$ (for some $c\ge 0)$, and outputs the mode for all range queries, then   Boolean matrix multiplication on two $\sqrt{n}\times\sqrt{n}$ matrices can be solved
		in time $O(n^{1+c})$.
	\end{theorem}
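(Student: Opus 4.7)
My plan is to derive Theorem~\ref{thm:colorLB} as a black-box consequence of Theorem~\ref{thm:chan}, via a direct reduction from the range mode problem to $\CORLIS$. Given an instance of range mode consisting of a sequence $\mT=(t_1,\ldots,t_n)$ and $m$ range queries $\Q$, I would construct a $\CORLIS$ instance by setting $\mS=(a_1,\ldots,a_n)$ with $a_i = i$ for each $i$, assigning color $t_i$ to position $i$, and keeping $\Q$ unchanged. The key observation is that because the value sequence $1,2,\ldots,n$ is itself strictly increasing, every subsequence of $\mS$ indexed by an increasing set of positions is automatically strictly increasing. Hence, for any query range $[x_1,x_2]$, the longest monochromatic increasing subsequence of $\mS\cap[x_1,x_2]$ is simply the largest monochromatic block of indices lying in $[x_1,x_2]$; its length equals the frequency of the mode of $\mT\cap[x_1,x_2]$, and its color equals the mode itself.

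For the color-only version of the lower bound, the output of $\CORLIS$ consists of one color per query, so $k=m$. If a combinatorial algorithm could answer color-only $\CORLIS$ in $k/2 + C\cdot n^{0.5-\varepsilon}(m+n) = O(n^{0.5-\varepsilon}(m+n))$ time, then running it on our reduced instance would solve range mode in the same time bound. Invoking Theorem~\ref{thm:chan} with $c=0.5-\varepsilon$ would then yield a combinatorial BMM algorithm for $\sqrt{n}\times\sqrt{n}$ Boolean matrices in $O(n^{1.5-\varepsilon})$ time, contradicting \CBMMH.

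The same reduction handles the LIS-reporting version: after $\CORLIS$ reports the full monochromatic $\LIS$ for each query, reading off its color costs $O(1)$ per query, so a fast reporting algorithm again produces a fast mode algorithm. To ensure the $k/2$ term remains absorbable into the $O(n^{0.5-\varepsilon}(m+n))$ budget needed to apply Theorem~\ref{thm:chan}, I would work within the specific hard instances produced by Chan et al.'s BMM-to-range-mode reduction, where the cumulative mode count is already $O(m+n)$ (the standard encoding of Boolean \textsc{AND}/\textsc{OR} via range mode uses constant-size multiplicities). The proof is essentially a one-line reduction, and the only real subtlety is this output-sensitive bookkeeping in the reporting case; the color-only route is the cleanest, and the LIS-reporting bound follows either by invoking it after a trivial $O(m)$ post-processing or by restricting attention to the bounded-$k$ regime where Chan's reduction already lives.
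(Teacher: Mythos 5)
There is a fundamental mismatch here: the statement you were asked to prove is Theorem~\ref{thm:chan} itself, namely the Chan et al.\ reduction \emph{from Boolean matrix multiplication to range mode} --- given any $O(n^{c}\cdot(m+n))$-time range-mode algorithm, one can multiply two $\sqrt{n}\times\sqrt{n}$ Boolean matrices in $O(n^{1+c})$ time. Your proposal never argues this. Instead you take Theorem~\ref{thm:chan} as a black box and reduce \emph{range mode to $\CORLIS$}, which is the content of the paper's proof of Theorem~\ref{thm:colorLB}, a different statement. Proving Theorem~\ref{thm:chan} requires the opposite construction: starting from two Boolean matrices $A$ and $B$, one must build a single sequence of $O(n)$ elements and a set of $O(n)$ index ranges such that the mode (or its frequency) on the range associated with the pair $(i,j)$ reveals whether $(AB)_{ij}=1$; nothing in your write-up encodes matrices as a sequence or explains how mode answers decode the product, so the claimed implication is simply not established. (The paper itself does not reprove this theorem either --- it cites \cite{chan2014linear} --- but a blind proof of the statement must supply that encoding.)

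As a secondary remark, the portion you did write is a reasonable alternative proof of Theorem~\ref{thm:colorLB}: setting $a_i:=i$ and coloring position $i$ by $t_i$ makes every monochromatic index set an increasing subsequence, so the longest monochromatic increasing subsequence in a range has length equal to the mode frequency and color equal to the mode. This is a touch simpler than the paper's choice $a_i:=z_i+f_i\cdot n$, which achieves the same effect by making each color class internally increasing. However, your handling of the $k/2$ term is shakier than the paper's: rather than restricting to ``bounded-$k$'' hard instances, the paper observes that an algorithm running in time $k/2+C\cdot n^{1/2-\varepsilon}(m+n)$ that writes $k$ output symbols forces $k\le 2C\cdot n^{1/2-\varepsilon}(m+n)$, so the whole runtime is $O(n^{1/2-\varepsilon}(m+n))$ unconditionally. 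None of this, though, repairs the central problem that the statement actually posed --- the BMM-to-range-mode reduction --- is left unproved.
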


	We are now ready to prove Theorem~\ref{thm:colorLB}.
	
	\begin{proof}[Proof of Theorem~\ref{thm:colorLB}]
		Suppose there is some $\varepsilon>0$, $C\in \mathbb{N}$, and a  combinatorial algorithm $\tilde{\mathcal{A}}$ to 
		answer the reporting version of the 
		$\CORLIS$ problem in $k/2+C\cdot n^{1/2-\varepsilon}\cdot (m+n)$ time. Then, we construct  an algorithm to refute \CBMMH in the following way. 
		
		Given as input a sequence $\mT:=(z_1,\ldots ,z_n)$
		of $n$ elements and set of $m$ mode range queries $\Q$, we construct an instance $(\mS:=(a_1,\ldots ,a_n),c:\mS\to [n],\Q)$ of $\CORLIS$ as follows. 
		Without loss of generality, we assume all elements in $\mT$ are in $[n]$. 
		For every $i\in [n]$, let $f_i$ be the number of times the $z_i$ has appeared in $\mT$ at an index less than $i$. Then, we define $a_i:=z_i+(f_i\cdot n)$. Note that we can compute $\mS$ from $\mT$ in $O(n\log n)$ time. Moreover, we define the color of $a_i$, i.e., $c(a_i)$ to simply be $z_i$. 
		
		We will now use $\tilde{\mathcal{A}}$ to answer the mode range queries in $\Q$   in $ 3C\cdot n^{1/2-\varepsilon}\cdot (m+n)$ time as follows (and this would contradict \CBMMH from Theorem~\ref{thm:chan}). 
		
		We feed $(\mS,c,\Q)$ to $\tilde{\mathcal{A}}$ and obtain in $k/2+C\cdot n^{1/2-\varepsilon}\cdot (m+n)+O(n\log n)$ time, for every $q\in \Q$, the longest monochromatic increase subsequence $\mS_q$ in $\mS\cap q$. The answer to the mode range query $q$ in $\mT$ is then simply the color of any element in $\mS_q$. Note that since the output is of size $k$ and we are running in $k/2+C\cdot n^{1/2-\varepsilon}\cdot (m+n)$ time (which includes the time needed to output), this means that $k\le 2C\cdot n^{1/2-\varepsilon}\cdot (m+n)$. Thus, the total runtime is $k/2+C\cdot n^{1/2-\varepsilon}\cdot (m+n)+O(n\log n)\le 3C\cdot n^{1/2-\varepsilon}\cdot (m+n)$ for large enough $n$.

		Note that the lower bound continues to hold  even when we are only required to report the color of the longest monochromatic increasing subsequence for each range query, as the color corresponds to the mode. 
	\end{proof}

	\section{Open Problems}\label{sec:open}
	We conclude the paper with a few
	open problems. 
	\begin{itemize}
		\item Is it possible to extend the algorithm in \cite{GGK24} for the $\ORLIS$ problem to solve the  $\TRLIS$ problem in  $O(n\text{ polylog } n + k)$ time? Or, is there 
		a (conditional) hardness result which makes obtaining
		the upper bound unlikely?
		\item Another interesting research direction is to design a deterministic algorithm
		which runs in sub-quadratic time. Specifically, can 
		the construction of 
		stitching set  be efficiently derandomized?
		\item For the $\CRLIS$ problem, can we bridge the gap between the
		upper bound and the (conditional) lower bound. 
		We conjecture that the upper bound can be further
		improved.
		\item Can we extend our algorithmic technique to beat the quadratic barrier for the weighted version of $\ORLIS$? 
		\item Finally, it would be interesting to explore the $\RLIS$ in the dynamic model. 
		
	\end{itemize}
	
	\subsection*{Acknowledgments}
	We thank the anonymous reviewers for several helpful comments that improved the presentation of our results. 
	
	\bibliographystyle{alpha}
	\bibliography{refs}
	
\end{document}